\theoremstyle{plain}
\newtheorem{definition}{Definition}
\newtheorem{theorem}{Theorem}
\newtheorem{lemma}{Lemma}
\newtheorem{remark}{Remark}
\newcommand{\RR}{\mathbb{R}}
\newcommand{\EE}{\mathbb{E}}
\newcommand{\PP}{\mathbb{P}}
\newcommand{\ind}{\mathds{1}}
\newcommand{\indep}{\rotatebox[origin=c]{90}{$\models$}}
\newcommand{\given}{{\,|\,}}
\newcommand{\biggiven}{\,\big{|}\,}
\newcommand{\Biggiven}{\,\Big{|}\,}
\newcommand{\bigggiven}{\,\bigg{|}\,}
\def\@#1\@{\begin{align}#1\end{align}}
\def\$#1\${\begin{align*}#1\end{align*}}
\definecolor{myblue}{rgb}{.8, .8, 1}
\definecolor{mathblue}{rgb}{0.2472, 0.24, 0.6} 
\definecolor{mathred}{rgb}{0.6, 0.24, 0.442893}
\definecolor{mathyellow}{rgb}{0.6, 0.547014, 0.24}
\newcommand{\cA}{{\mathcal{A}}}
\newcommand{\cC}{{\mathcal{C}}}
\newcommand{\cD}{{\mathcal{D}}}
\newcommand{\cE}{{\mathcal{E}}}
\newcommand{\cI}{{\mathcal{I}}}
\newcommand{\cM}{{\mathcal{M}}}
\newcommand{\cN}{{\mathcal{N}}}
\newcommand{\cP}{{\mathcal{P}}}
\newcommand{\cT}{{\mathcal{T}}}
\newcommand{\cU}{{\mathcal{U}}}
\newcommand{\cX}{{\mathcal{X}}}
\newcommand{\cY}{{\mathcal{Y}}}
\newcommand{\bart}{{\bar{t}}}
\newcommand{\tr}{\textnormal{tr}}
\newcommand{\test}{\textnormal{test}}
\newcommand{\quant}{\text{Quantile}}
\renewcommand{\hat}{\widehat}
\newcommand{\bern}{\text{Bern}}
\newcommand{\iid}{\text{i.i.d.}}
\newcommand{\eps}{\varepsilon}
\newcommand{\tw}{\widetilde{w}}
\newcommand{\esterr}{\textnormal{EstErr}}
\newcommand{\dd}{\textnormal{d}}
\title{Not all distributional shifts are equal:\\ 
Fine-grained robust conformal inference}
\author[1]{Jiahao Ai}
\author[2]{Zhimei Ren}
\affil[1]{School of Mathematical Sciences, Peking University}
\affil[2]{Department of Statistics and Data Science, University 
of Pennsylvania}
\date{\today}
\begin{document}

\maketitle
\begin{abstract}
    We introduce a fine-grained framework for uncertainty quantification of 
    predictive models under distributional shifts.
    This framework distinguishes the shift in covariate distributions from 
    that in the conditional relationship between the outcome ($Y$) and the covariates ($X$). 
    We propose 
    to reweight the training samples to adjust for an identifiable  
    shift in covariate distribution while protecting against the  worst-case 
    conditional distribution shift bounded in an $f$-divergence ball. 
    Based on ideas from conformal 
    inference and distributionally robust learning, 
    we present an algorithm that outputs (approximately) valid and efficient 
    prediction intervals in the presence of distributional shifts. 
    As a use case, we apply the framework to 
    sensitivity analysis of individual treatment effects with hidden confounding.
    The proposed methods are evaluated in simulations and 
    four real data applications, demonstrating superior robustness and 
    efficiency compared with existing benchmarks.
\end{abstract}

\section{Introduction}
It has been widely observed that the performance of 
predictive models falls short of expectation
when generalized to a population whose distribution 
differs from that of the training data 
(see e.g.,~\citet{recht2019imagenet,miller2020effect,
wong2021external,namkoong2023diagnosing,liu2023need} 
and the references therein). As predictive models 
are increasingly employed in high-stakes settings,
it is imperative to accompany the  predicted outcomes  
with calibrated uncertainty quantification
when deploying a model to new environments.
A widely adopted approach to uncertainty quantification is to 
provide a prediction set that contains the true outcome 
with high probability.
The prediction set informs the confidence 
we have in the predicted outcome.

Among the tools for constructing prediction sets,
conformal prediction (CP)~\citep{vovk2005algorithmic} is 
an attractive framework that generates valid prediction sets 
that are guaranteed 
to include the true outcome 
with pre-specified probability.
The validity of CP holds for {\em any} predictive model, as long as 
the training and test data are exchangeable, e.g., when they are 
identically independently distributed (i.i.d.). 
In the presence of distributional shifts, however, the exchangeability/i.i.d.~assumption
breaks, and CP no longer delivers valid prediction sets. 
To address this challenge, prior work~\citep{cauchois2023robust} proposes a robust CP 
method that outputs prediction sets that are valid when the target  
distribution ranges within a neighborhood of the 
training distribution. To be more specific,
let $X \in \cX$ denote the covariates and $Y \in \cY$ the outcome/response.
Consider a training set of $n$ samples $(X_i,Y_i)\stackrel{\iid}{\sim}P_{X,Y}$ 
and an independent test unit $(X_{n+1},Y_{n+1}) \sim Q_{X,Y}$, where we 
only get to observe $X_{n+1}$ and wish to predict $Y_{n+1}$.~\citet{cauchois2023robust} 
assumes that the $f$-divergence between $Q_{X,Y}$ and 
$P_{X,Y}$ is bounded by a parameter $\rho$, and provides
prediction sets that ensure guarantees even for the worst-case $Q_{X,Y}$.

The method of~\citet{cauchois2023robust} provides robustness 
against the worst-case {\em joint} distributional shift of $(X,Y)$, 
but no distinction is made between the covariate shift and the
conditional distributional shift. As pointed out by a recent line of research,  
different types of distributional shifts appear in different
tasks and result in different consequences~\citep{mu2022factored, 
namkoong2023diagnosing,jin2023diagnosing,liu2023need}.
Without separating the sources of distributional shifts and 
taking specialized treatment, we will show that the joint modeling approach
of~\citet{cauchois2023robust} can be overly conservative in practice.
In this work, we take a closer look at distributional shift, 
and provide a fine-grained robust predictive inference approach with 
improved efficiency.
\subsection{Decomposing the distributional shifts}
We decompose the distributional shifts into two types:
\begin{enumerate}
\item [(1)] {\em The covariate shift:} the marginal distribution of $X$
is different in the training and target environment. For example, 
the age/gender structure in the new environment differs from that in 
the training environment. 

\item [(2)] {\em The $Y\given X$ shift:}
the conditional relationship between the outcome and the 
the covariate is different in the training and target 
environment. This could happen when there are unobserved 
confounders, or when the training and target data are 
collected from different periods and the conditional relationship
varies over time.
\end{enumerate}
The above two types of distributional shifts are 
different in nature --- for one thing, the former type of 
distributional shift is {\em identifiable} but the latter is not. 
In most cases, the distributional shift is a mixture of the two. 
Instead of guarding against the worst-case joint distributional shift, 
we propose to tease apart the two types of shifts,
reweighting the training samples according to the estimated covariate shift
and adjusting the confidence level to account for the worst-case 
$Y\given X$ shift.

Specifically, we assume the $Y\given X$ shift to be 
bounded in the $f$-divergence, i.e., $D_f(Q_{Y\given X} \,\|\, 
P_{Y\given X}) \le \rho$, but posit no constraints 
on the covariate shift. For such distributional shifts, 
our proposed method aims to construct
a prediction interval $\hat{C}_{f,\rho}(X_{n+1})$ with the 
training data, such that it covers the true outcome with 
high probability under the target distribution.


\subsection{Our contributions}
This work introduces a new framework for 
calibrated uncertainty quantification in the 
presence of distributional shifts.
Toward this end, we make the following contributions:
\begin{enumerate}
\item [(1)] We present {\em Weighted Robust Conformal 
Prediction (WRCP)}, which treats the covariate shift 
and the $Y\given X$ shift differently. It (approximately) 
achieves the desired coverage under the proposed framework, 
with the miscoverage rate determined by the estimation 
error of the covariate likelihood ratio $\dd Q_X/\dd P_X$.
\item [(2)] In the case when estimating 
the covariate shift is challenging (e.g., 
when $X$ is high-dimensional), we propose 
a debiased variant of WRCP, namely D-WRCP, which enjoys the 
double-robustness property --- its
miscoverage rate depends on the product of the estimation error of 
$\dd Q_X /\dd P_X$ and that of conditional quantiles  
of the residuals from predicting the outcomes.
\item [(3)] As a special example, we show that our proposed methods
can be adapted to conducting sensitivity analysis for individual 
treatment effects (ITEs) under  the 
$f$-sensitivity model~\citep{jin2022sensitivity}.
\item [(4)] We empirically evaluate the proposed methods 
in simulations and four real data applications, demonstrating 
their validity and improved efficiency.
\end{enumerate}

\subsection{Related literature}
\paragraph{Conformal prediction beyond exchangeability.}
With exchangeable/i.i.d.~data, there is a long list of works on the theoretical 
property, efficient implementation and application of conformal 
prediction (see e.g.,~\citet{vovk2005algorithmic,papadopoulos2002inductive,
lei2018distribution,romano2019conformalized,foygel2021limits,angelopoulos2023conformal}). 

Beyond exchangeability,~\citet{tibshirani2019conformal,park2022pac} 
consider the pure covariate shift setting, with the former focusing
on the marginal coverage guarantee and the latter the training-conditional guarantee;
also under the pure covariate shift setting,~\citet{qiu2022distribution,yang2022doubly}
builds upon semi-parametric theory to develop more efficient CP methods 
with asymptotic coverage guarantees.
The de-biased version of our proposal draws inspiration
from these two works, and we generalize them to the specific 
distributional shift model under consideration.~\citet{podkopaev2021distribution,si2023distributionally} 
tackles the label shift setting, where the marginal distribution of 
$Y$ is subject to changes but $X \given Y$ remains invariant in the training and target distribution.  
The work of~\citet{barber2023conformal} addresses a general form of distribution shift by up-weighting 
training points whose distribution is closer to that of the target distribution (the weights need to be independent 
of the data).

As mentioned earlier,~\cite{cauchois2023robust} is concerned with robust CP 
against the worst-case joint shift in $(X,Y)$.~\citet{gendler2021adversarially,ghosh2023probabilistically}
investigate the robustness of CP under adversarial attacks.
Another two closely related works 
on robust CP are~\citet{jin2023sensitivity,yin2022conformal}, which study sensitivity analysis 
of ITEs under the marginal $\Gamma$-selection model~\citep{tan2006distributional};
the type of distributional shift (caused by hidden confounding) puts no requirements 
on the covariate shift and assumes that the shift in $Y\given X$ is  uniformly bounded by constants, i.e.,
$1/\Gamma \le \frac{\dd Q_{Y \given X}}{\dd P_{Y \given X}} \le \Gamma$.
Compared with our model that assumes the $Y\given X$ shift to be bounded {\em on average} 
(the $f$-divergence takes the expectation over $Y$), the point-wise bound requires the 
{\em maximum} shift to be bounded, which can sometimes be conservative in practice (see more 
discussion and examples in~\citet{jin2022sensitivity}).

\paragraph{Distributionally robust learning.}
Distributionally robust learning studies the broad topic of learning from data 
with guarantees under the worst-case distributional shift within a specified set of distributions.
Typical tasks in this field includes 
parameter estimation~\citep{shafieezadeh2015distributionally,blanchet2019quantifying,duchi2021learning,duchi2023distributionally},
policy learning~\citep{si2023distributionally,mu2022factored,zhang2023optimal}, among others. 
In particular,~\citet{mu2022factored} proposes learning a robust policy by 
separately considering covariate shifts and $Y \given X$ shifts, echoing the proposal 
in this paper. Compared with the existing literature, 
our work takes a different angle by studying the
uncertainty quantification problem under distributional shifts.

\paragraph{Sensitivity analysis.}
In causal inference, distributional shifts can arise due to unobserved confounders, and 
sensitivity analysis is a standard tool for assessing the robustness of causal effect 
estimates under such shifts. 
Under the aforementioned (marginal)
$\Gamma$-selection model~\citep{rosenbaum1987sensitivity,tan2006distributional}, 
a line of papers~\citep{zhao2019sensitivity, yadlowsky2018bounds,kallus2020confounding,sahoo2022learning}
study the estimation of the average treatment effect (ATE) or the policy values, 
and the work of~\citet{kallus2021minimax,lei2023policy} 
consider learning the optimal policy.
Recently,~\cite{jin2022sensitivity} proposes the $f$-sensitivity model,
and discusses how to estimate the ATE under the model. We shall show later in this paper 
that the distributional shift under the $f$-sensitivity model fits exactly in our framework, and hence 
our proposed method can be adopted there for the uncertainty quantification of ITEs.

\section{Problem setup}
Consider a training data set $\cD_\tr = \{(X_i,Y_i)\}_{i=1}^n$, 
where $(X_i,Y_i) \stackrel{\text{i.i.d.}}{\sim} P_{X,Y}$. For a test 
unit $(X_{n+1},Y_{n+1}) \sim Q_{X,Y}$, for which only the covariate is observed, 
we aim at using $\cD_\tr$ to construct an interval $\hat{C}(X_{n+1})$ such that  
\@\label{eq:marg_pi}
\PP_{(X_{n+1},Y_{n+1}) 
\sim Q_{X,Y}}\big(Y_{n+1} \in \hat{C}(X_{n+1})\big) \ge 1-\alpha,
\@
where the probability is taken over the randomness of $(X_i,Y_i)\stackrel{\iid}{\sim}P_{X,Y}$
and $(X_{n+1},Y_{n+1})\sim Q_{X,Y}$, and $\alpha \in (0,1)$ 
is the  pre-specified mis-coverage level.

Let $s: \cX \times \cY \mapsto \RR$ denote a score function, and 
we define for each $i \in [n] = \{1,2,\ldots,n\}$ 
the nonconformity score $S_i = s(X_i,Y_i)$. 
For example, when $\hat{\mu}(x)$ is a fitted function of the conditional mean of $Y \given X$, 
one can take  $s(x,y) = |y - \hat{\mu}(x)|$.\footnote{Strictly, we should write the 
score function as $s(x,y;\hat{\mu})$ as it also depends on $\hat{\mu}$. For notational 
simplicity, we suppress the dependence on $\hat{\mu}$ (or other predictive functions) 
in the score function when the 
context is clear. } For other types of nonconformity scores, 
see also~\citet{romano2019conformalized,chernozhukov2021distributional,guan2023localized,gupta2022nested}.
In order to achieve~\eqref{eq:marg_pi}, it suffices to find 
(an upper bound of) the $(1-\alpha)$-th quantile of $s(X_{n+1},Y_{n+1})$ under $Q_{X,Y}$. 



\subsection{Characterizing the distributional shifts}
As introduced earlier, the distributional shift between $P_{X,Y}$ 
and $Q_{X,Y}$ can be originating from two sources:
(1) the difference 
between $P_X$ and $Q_X$ and
(2) 
the difference between $P_{Y \given X}$ and $Q_{Y \given X}$.
The two types of distribution shifts are 
different in nature: often the 
covariate shift is observable and estimable --- since we have access 
to the covariates in the test set --- while the $Y\given X$
shift is not identifiable. Based on this observation,  
we propose distinct treatments to these two types of 
distributional shift.

The covariate shift is represented by the 
likelihood ratio $w(x) = \frac{dQ_X}{dP_X}(x)$. 
We do not posit any assumption on $w(x)$ (except that 
$Q_X$ is absolutely continuous with respect to $P_X$), and shall use 
data to estimate this quantity. 
For the conditional distributional shift, 
we assume that the target distribution $Q_{Y\given X}$ falls 
within a ``neighborhood ball'' of $P_{Y\given X}$, whose radius
is controlled by a parameter $\rho$. The neighborhood ball
is formalized by the $f$-divergence. 

\begin{definition}[$f$-divergence]
Let $P$ and $Q$ be two probability distributions over a space $\Omega$ 
such that $P$ is absolutely continuous with respect to $Q$. 
For a convex function $f$ such that $f(1) = 0$, 
the $f$-divergence of $P$ from $Q$ is defined as 
$D_f(P \,\|\, Q) = \EE_Q [f(dP/dQ)]$, where $dP/dQ$
is the Radon-Nikodym derivative.
\end{definition}
Throughout, we assume $f$ to be 
closed and convex, with $f(1) = 0$ and $f(x)<+\infty$ for 
$x>0$. Common choices of $f$ include $f(x) = x\log x$,
which yields the Kullback–Leibler (KL) divergence, 
$f(x) = \frac{1}{2}|x-1|$ that yields the total variation (TV) distance, 
and  $f(x) = (x-1)^2$ that yields the Pearson $\chi^2$-divergence.

With the target conditional distribution of 
$Y \given X$ satisfying $D_f(Q_{Y\given X = x} \,\|\, 
P_{Y \given X=x}) \le \rho$, for $P_X$-almost all $x$,
we can define the set of possible 
$Q_{X,Y}$ as
\@\label{eq:iden_set}
\cP(\rho; P) \,:=\, 
\big\{Q \text{ s.t. }(X_{n+1},Y_{n+1})\sim Q: D_f\big(Q_{Y\given X = x} \,\|\,P_{Y\given X=x}\big)\leq \rho, 
\text{ for }P_X\text{-almost all }x\big\}.
\@
In what follows, we shall refer to $\cP(\rho;P)$ as 
the identification set. When $Q \in \cP(\rho; P)$, the
task in~\eqref{eq:marg_pi} can be equivalently written as 
\$ 
\inf_{Q \in \cP(\rho;P)} \PP_{(X_{n+1},Y_{n+1})\sim Q}\big(Y_{n+1}
\in \hat{C}(X_{n+1})\big) \ge 1-\alpha.
\$

\subsection{Split conformal prediction}
When $P_{X,Y} = Q_{X,Y}$, the method of conformal inference 
offers an elegant solution for finding the quantile of $s(X_{n+1},Y_{n+1})$
by leveraging the exchangeability 
among $\{(X_i,Y_i)\}_{i=1}^{n+1}$. In particular, the split 
conformal inference~\citep{vovk2005algorithmic,papadopoulos2002inductive} 
is a computationally efficient variant of conformal inference that
begins by randomly splitting the training data into two 
folds, $\cD_\tr^{(0)}$ and $\cD_\tr^{(1)}$, where $n_0 = |\cD_\tr^{(0)}|$
and $n_1 = |\cD_\tr^{(1)}|$.
It then uses $\cD_\tr^{(0)}$
for fitting the prediction function 
$\hat{\mu}: \cX \mapsto \RR$ and $\cD_\tr^{(1)}$ for obtaining the 
estimated quantile. The prediction interval takes the form 
\@ \label{eq:conf_pi}
\hat{C}(X_{n+1}) = \Big\{y \in \RR: s(X_{n+1},y)\le  
\quant\Big(1-\alpha,
\{S_i\}_{i \in \cD_\tr^{(1)}} \cup \{\infty\}\Big)\Big\},
\@
where $\quant(\beta,\{Z_i\}_{i=1}^n)$ denotes the $\lceil n\beta \rceil$-th smallest 
element among $Z_1,Z_2,\ldots,Z_n$. The prediction interval~\eqref{eq:conf_pi}
guarantees that $\PP(Y_{n+1} \in \hat{C}(X_{n+1})) \ge 1-\alpha$ 
when $P_{X,Y} = Q_{X,Y}$ without any additional assumptions~\citep{vovk2005algorithmic}; 
if the ties among the nonconformity scores happen with probability zero, 
then the coverage is also tight~\citep{lei2018distribution}, i.e., 
$\PP(Y_{n+1} \in \hat{C}(X_{n+1})) \le 1-\alpha +
1/n_1$.



\section{Methodology}
In this section, we describe how to 
generalize (split) conformal inference 
to efficiently handle distributional shift.
To start, we fix the radius of the identification set $\rho >0$. 
As in the standard split conformal prediction, we 
start by splitting the training set into two folds, $\cD_{\tr}^{(0)}$
and $\cD_{\tr}^{(1)}$. The fitting fold $\cD_\tr^{(0)}$ is used for 
fitting the prediction function $\hat{\mu}$ (or other functions depending on the 
type of nonconformity score). The calibration fold $\cD_\tr^{(1)}$
is devoted to finding the largest quantile of $S_{n+1}$ for $Q \in \cP(\rho;P)$.
To this end, we follow~\citet{cauchois2023robust} and define
\begin{align}
g_{f,\rho}(\beta) \,:=\, \inf\bigg\{z\in [0,1]:\beta f\Big(\frac{z}{\beta}\Big)
+(1-\beta)f\Big(\frac{1-z}{1-\beta}\Big)\leq \rho\bigg\},
\end{align}
and its inverse
\begin{align}
g_{f,\rho}^{-1}(\tau) \,:=\, \sup\big\{\beta \in [0,1]:g_{f,\rho}(\beta)\leq \tau\big\}.
\end{align}
Recall that $w(x) = \frac{dQ_X}{dP_X}(x)$.
We construct our prediction set as
\begin{align}
\label{eq:pi}
& \hat{C}_{f,\rho}(x)=
\bigg\{y\in \RR: s(x,y)\leq \text{Quantile} 
\Big(g_{f,\rho}^{-1}(1-\alpha),
\sum_{i \in \cD_\tr^{(1)}} 
p_i(x)\delta_{S_i} + p_{n+1}(x)\delta_\infty \Big) \bigg\},\\
& \text{where }p_i(x) = \frac{w(X_i)}{\sum_{j \in \cD_\tr^{(1)}} w(X_j) 
+ w(x)}, 
\text{ and }
p_{n+1}(x) = \frac{w(x)}{\sum_{j \in \cD_\tr^{(1)}} w(X_j) 
+ w(x)}.
\end{align}
In words, we upper bound the $(1-\alpha)$-th quantile 
of $S_{n+1}$ under $Q$ by a weighted quantile under $P$ at 
a slightly inflated level. The validity of $\hat{C}_{f,\rho}(X_{n+1})$ 
is formalized by Theorem~\ref{thm:cov_known_shift}, 
whose proof is deferred to Appendix~\ref{appx:proof_cov_known_shift}.

\begin{theorem}[Prediction interval with known covariate shift]
\label{thm:cov_known_shift}
Assume the training data $\{(X_i,Y_i)\}_{i=1}^n 
\stackrel{\text{i.i.d.}}{\sim} P_{X,Y}$ 
and  $(X_{n+1}, Y_{n+1})\sim Q_{X, Y}$ 
is independent of $\{(X_i,Y_i)\}_{i=1}^n$.
Assume that $Q$ is absolutely continuously continuous 
with respect to $P$, and denote $w(x) = \frac{dQ_X}{dP_X}(x)$.
For $\alpha \in (0,1)$, the prediction set $\hat{C}_{f,\rho}(X_{n+1})$
defined in~\eqref{eq:pi}
satisfies that
\begin{align}
    \mathbb{P}\big(Y_{n+1}\in \hat{C}_{f,\rho}(X_{n+1})\big)\geq 
    g_{f,\rho}\big(g^{-1}_{f,\rho}(1-\alpha)\big).
\end{align}
Furthermore, if $g_{f,\rho}(1) \ge 1-\alpha$, then 
\$ 
\mathbb{P}\big(Y_{n+1}\in \hat{C}_{f,\rho}(X_{n+1})\big)\geq 
1-\alpha.
\$
\end{theorem}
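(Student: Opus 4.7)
The plan is to introduce an intermediate distribution that decouples the covariate shift from the $Y\given X$ shift, handle each piece with a distinct tool (weighted conformal prediction for the former, the $f$-divergence transport inequality for the latter), and then compose the two. Let $\tilde{Q}$ denote the distribution with marginal $Q_X$ and conditional $P_{Y\given X}$. Then the shift $P \to Q$ factors through $\tilde{Q}$: moving $P \to \tilde{Q}$ is a pure covariate shift, and moving $\tilde{Q} \to Q$ is a pure conditional shift.

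First I would verify that $D_f(Q \,\|\, \tilde{Q}) \le \rho$. Since $dQ/d\tilde{Q}(x,y) = dQ_{Y\given X=x}/dP_{Y\given X=x}(y)$, Fubini gives
\[
D_f(Q \,\|\, \tilde{Q}) \,=\, \EE_{X\sim Q_X}\Big[D_f\big(Q_{Y\given X} \,\|\, P_{Y\given X}\big)\Big] \,\le\, \rho,
\]
where the inequality uses the assumption, valid $P_X$-a.s.\ and hence $Q_X$-a.s.\ since $Q_X \ll P_X$. Thus $Q$ lies in an $f$-divergence ball of radius $\rho$ around $\tilde{Q}$, even though no bound whatsoever is placed on the covariate shift $w = dQ_X/dP_X$.

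Next I would establish that $\hat{C}_{f,\rho}(X_{n+1})$ attains coverage at least $\beta := g_{f,\rho}^{-1}(1-\alpha)$ under $\tilde{Q}$. This is a direct application of weighted conformal prediction in the spirit of~\citet{tibshirani2019conformal}: under $\tilde{Q}$, the test point differs from the calibration points only through the reweighting of the $X$-marginal by $w$, so the scores $\{S_i\}_{i\in \cD_\tr^{(1)}}\cup\{S_{n+1}\}$ are weighted-exchangeable with weights $w(X_i)$ and $w(X_{n+1})$. The normalized weights $p_i(X_{n+1})$ and $p_{n+1}(X_{n+1})$ then give, by the standard weighted-quantile argument,
\[
\PP_{\tilde{Q}}\big(Y_{n+1} \in \hat{C}_{f,\rho}(X_{n+1})\big) \,\ge\, \beta,
\]
with the $\delta_\infty$ padding ensuring that the quantile at level $\beta$ is always well-defined (even for $\beta$ close to $1$).

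Finally I would invoke the defining property of $g_{f,\rho}$ (the $f$-divergence transport bound already leveraged by~\citet{cauchois2023robust}): for \emph{any} event $A$ and distributions with $D_f(Q\,\|\,\tilde{Q})\le \rho$, one has $\PP_Q(A) \ge g_{f,\rho}(\PP_{\tilde{Q}}(A))$. Applying this to $A = \{Y_{n+1}\in \hat{C}_{f,\rho}(X_{n+1})\}$ and using monotonicity of $g_{f,\rho}$ yields
\[
\PP_Q\big(Y_{n+1}\in \hat{C}_{f,\rho}(X_{n+1})\big) \,\ge\, g_{f,\rho}\big(g_{f,\rho}^{-1}(1-\alpha)\big),
\]
which is the main claim. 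For the ``furthermore'': if $g_{f,\rho}(1)\ge 1-\alpha$, then $1-\alpha$ lies in the range of the non-decreasing function $g_{f,\rho}$ on $[0,1]$, so $g_{f,\rho}(g_{f,\rho}^{-1}(1-\alpha))\ge 1-\alpha$ by a short left-continuity argument on the sup in the definition of $g_{f,\rho}^{-1}$.

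The main obstacle I anticipate is the careful bookkeeping in the weighted-exchangeability step: making precise that reweighting by $w$ against the $P$-training distribution really does produce valid level-$\beta$ coverage for the $\tilde{Q}$-test point, and that conditioning on $X_{n+1}=x$ in the denominators of the $p_i$'s is handled correctly. The composition argument itself (steps 1 and 3) is short once one has identified the intermediate $\tilde{Q}$ and the fact that the joint $f$-divergence reduces to the average of the pointwise conditional divergence.
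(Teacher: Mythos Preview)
Your proposal is correct and shares the paper's overall architecture: factor through the intermediate distribution $\tilde{Q} = Q_X \times P_{Y\given X}$, invoke weighted conformal prediction to get $\PP_{\tilde{Q}}(A) \ge g_{f,\rho}^{-1}(1-\alpha)$, and then transport across the conditional shift via $g_{f,\rho}$. Where you diverge is in the transport step. The paper applies the data-processing inequality \emph{conditionally} on $(X_{n+1},\cD)$, obtaining $\PP_{Q_{Y\given X}}(A \given X_{n+1},\cD) \ge g_{f,\rho}\big(\PP_{P_{Y\given X}}(A \given X_{n+1},\cD)\big)$ pointwise, and then averages and uses Jensen's inequality (convexity of $g_{f,\rho}$) to pull $g_{f,\rho}$ outside the expectation. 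You instead bound the \emph{joint} divergence once, via $D_f(Q \,\|\, \tilde{Q}) = \EE_{Q_X}\big[D_f(Q_{Y\given X}\,\|\,P_{Y\given X})\big] \le \rho$, and apply the data-processing inequality a single time at the marginal level (this is legitimate because the training data $\cD$ has the same law under both measures, so the product divergence over $(\cD,X_{n+1},Y_{n+1})$ is still $\le \rho$). Your route is a touch shorter---no Jensen---and as a byproduct shows the conclusion holds under the weaker averaged constraint $\EE_{Q_X}\big[D_f(Q_{Y\given X}\,\|\,P_{Y\given X})\big] \le \rho$; the paper's route, on the other hand, isolates the conditional inequality explicitly, which is the form it reuses verbatim in the proofs of the subsequent theorems.
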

The condition that $g_{f,\rho}(1) \ge 1-\alpha$ holds for 
the KL divergence and the $\chi^2$ distance for any choice of $\rho,\alpha>0$; 
it holds for the TV distance for  $\alpha \ge \rho/2$. 
Two remarks are in order.
\begin{remark}
As shown in~\citet[Lemma A.1]{cauchois2023robust}, 
$g_{f,\rho}(\beta)$ is non-decreasing in $\beta$, 
which allows for efficient computation of $g_{f,\rho}^{-1}(\tau)$.
For example, by binary search, we can get an estimate of
${g}_{f,\rho}^{-1}(\tau)$ with error $\varepsilon$ within 
$O(\log({(1-\tau)}/{\epsilon}))$ runs.
\end{remark}

\begin{remark}
When there is no distributional shift in $Y\given X$, i.e., $\rho = 0$, 
our method recovers split weighted conformal 
prediction~\citep{tibshirani2019conformal}; when there is no covariate shift, 
i.e., $w(x) \equiv 1$, it recovers the method of~\citet{cauchois2023robust}.
Our procedure is therefore a generalization of both methods.
\end{remark}

We now have a general recipe for handling distributional shifts
in $\cP(\rho;P)$. So far the recipe requires that the covariate shift
$w(x)$ to be specified a priori --- this 
may be the case where the covariate shift is induced by 
a covariate-based selection rule that is known to the experimenter ---
but more often, we do not know the exact form of $w(x)$.
The following section discusses how to estimate $w(x)$ with data 
and how the coverage depends on the estimation quality.

\subsection{Estimating the covariate shift}
Consider a common scenario in prediction tasks: 
there are multiple test units 
denoted by $ \cD_{\test} = \{(X_{n+j},Y_{n+j})\}_{j=1}^m$,
where $(X_{n+j},Y_{n+j}) \stackrel{\text{i.i.d.}}{\sim} Q_{X,Y}$.
For each $j \in [m]$, we aim to construct a prediction interval
$\hat{C}_{f,\rho,n+j}(X_{n+j})$ satisfying~\eqref{eq:marg_pi}.

The multiple test units allow us to estimate $w(x)$. 
In particular, we adopt the estimation approach introduced in~\citet{tibshirani2019conformal},
where we first randomly split $\cD_{\test}$
into two folds: $\cD_{\test}^{(0)}$ and $\cD_{\test}^{(1)}$, 
indexed by $\cI_{\test}^{(0)}$ and $\cI_{\test}^{(1)}$, respectively.
Without loss of generality, assume that $n+j \in \cI_\test^{(1)}$.
Recall that the training set $\cD_\tr$ is also divided into 
$\cD_\tr^{(0)}$ and $\cD_\tr^{(1)}$. 
We set aside $\cD_\tr^{(0)}\cup \cD_\test^{(0)}$
for estimating $w(\cdot)$. Let $A$ be a binary variable
indicating whether the sample is from the training 
set or the test set, i.e., $A_i = 0$ for $i \in \cI_\tr^{(0)}$
and $A_i = 1$ for $i \in \cI_\test^{(0)}$.
For $i \in \cI_\tr^{(0)} \cup \cI_\test^{(0)}$, by Bayes' rule, 
\$ 
\frac{\PP(A_i = 1 \given X_i = x)}{\PP(A_i = 0 \given X_i = x)}
= \frac{dQ_X}{dP_X}(x) \cdot \frac{\PP(A_i = 1)}{\PP(A_i = 0)}
\propto w(x).
\$
The above tells us that the likelihood ratio $w(x)$ can be estimated
by training a classifier on $\cD_\tr^{(0)}\cup \cD_\test^{(0)}$:
once we  obtain $\hat{\PP}(A = 1\given X = x)$, 
we can let $\hat{w}(x) = \frac{\hat{\PP}(A=1\given X=x)}
{1 - \hat{\PP}(A = 1\given X = x)}$  --- this is an estimator 
for $w(x)$ (up to constants).
We then construct the prediction interval by replacing $w(x)$ with
$\hat{w}(x)$ in~\eqref{eq:pi}.
The complete procedure is summarized in Algorithm~\ref{alg:wrcp}, 
and the following theorem provides the coverage guarantee 
when the estimated $w(x)$ is used.

\begin{theorem}\label{thm:cov_est}
Under the same assumptions of Theorem~\ref{thm:cov_known_shift}, 
suppose that $\mathbb{E}_{X\sim P_X}\big[\hat{w}^{(k)}(X)]<\infty$, 
for $k \in \{0,1\}$.
Then for any $k \in \{0,1\}$ and any $n+j \in \cI_\test^{(k)}$, 
the prediction set of Algorithm~\ref{alg:wrcp} satisfies
\begin{align}
\mathbb{P}\big(Y_{n+j} \in \hat{C}_{f,\rho,n+j}(X_{n+j})\big)
& \ge g_{f,\rho}\big(g^{-1}_{f,\rho}(1-\alpha)\big) 
- \frac{1}{2}g_{f,\rho}'\big(g^{-1}_{f,\rho}(1-\alpha)\big)
\cdot \mathbb{E}_{X\sim P_X}\bigg[\Big|\frac{\hat{w}^{(k)}(X)}
{\EE[\hat{w}^{(k)}(X)]}-w(X)\Big|\bigg],
\end{align}
where $g'_{f,\rho}$ is the left derivative of $g_{f,\rho}$.
Furthermore, if $g_{f,\rho}(1) \ge 1-\alpha$, then 
\$ 
\mathbb{P}\big(Y_{n+j}\in \hat{C}_{f,\rho,n+j}(X_{n+j})\big)\geq 
1-\alpha
- \frac{1}{2}g_{f,\rho}'\big(g^{-1}_{f,\rho}(1-\alpha)\big)
\cdot \mathbb{E}_{X\sim P_X}\bigg[\Big|\frac{\hat{w}^{(k)}(X)}
{\EE[\hat{w}^{(k)}(X)]}-w(X)\Big|\bigg].
\$
\end{theorem}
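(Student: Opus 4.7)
My plan is to condition on the estimated weight $\hat w^{(k)}$ (equivalently, on the auxiliary data $\cD_\tr^{(0)}\cup \cD_\test^{(0)}$ used to fit it), apply Theorem~\ref{thm:cov_known_shift} in a ``pseudo-true'' world where $\tilde w(x) := \hat w^{(k)}(x)/\EE_{X\sim P_X}[\hat w^{(k)}(X)]$ plays the role of the true covariate likelihood ratio, and then pay a total-variation-style cost to transfer the resulting guarantee from the pseudo-true distribution $\tilde Q_X:=\tilde w\cdot P_X$ back to the actual target $Q_X$. Averaging over $\hat w^{(k)}$ at the end produces the unconditional bound in the theorem.

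\textbf{Key bound and Jensen step.} The weights $p_i(x)$ in Algorithm~\ref{alg:wrcp} are scale-invariant, so $\hat C_{f,\rho,n+j}$ is exactly the interval that the procedure would output if $\tilde w$ were the truth. Define the conditional $P$-coverage $\phi(x):=\PP_{Y\sim P_{Y\mid X=x}}(Y\in \hat C_{f,\rho,n+j}(x))$. For any $Q_{Y\mid X=x}$ in the $f$-divergence ball of radius $\rho$, the definition of $g_{f,\rho}$ gives $Q_{Y\mid X=x}(\hat C_{f,\rho,n+j}(x))\ge g_{f,\rho}(\phi(x))$ pointwise, so
\begin{align*}
\PP_Q\big(Y_{n+j}\in \hat C_{f,\rho,n+j}(X_{n+j})\big)\ \ge\ \EE_{X\sim Q_X}[g_{f,\rho}(\phi(X))]\ \ge\ g_{f,\rho}\big(\EE_{X\sim Q_X}[\phi(X)]\big),
\end{align*}
where the second inequality is Jensen combined with convexity of $g_{f,\rho}$ (to be invoked from \citet[Lemma~A.1]{cauchois2023robust} or verified directly). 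It then remains to lower bound $\EE_{Q_X}[\phi(X)]$.

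\textbf{Bridging $\tilde Q_X\to Q_X$.} The weighted exchangeability argument behind split weighted conformal prediction~\citep{tibshirani2019conformal}, applied in the pseudo-true world, gives $\EE_{\tilde Q_X}[\phi(X)]\ge g_{f,\rho}^{-1}(1-\alpha)$. Passing from $\tilde Q_X$ to $Q_X$,
\begin{align*}
\EE_{Q_X}[\phi(X)]-\EE_{\tilde Q_X}[\phi(X)]\ =\ \EE_{X\sim P_X}\big[(w(X)-\tilde w(X))\phi(X)\big].
\end{align*}
The factor of $1/2$ is manufactured by mean-zero centering: since both $w$ and $\tilde w$ integrate to $1$ against $P_X$, any constant may be subtracted from $\phi(X)$ inside the expectation on the right; taking that constant to be $1/2$ and using $|\phi-1/2|\le 1/2$ yields $|\EE_{P_X}[(w-\tilde w)\phi(X)]|\le \tfrac12 \EE_{P_X}[|w-\tilde w|]$, and hence $\EE_{Q_X}[\phi(X)]\ge g_{f,\rho}^{-1}(1-\alpha)-\tfrac12\EE_{P_X}[|w-\tilde w|]$.

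\textbf{Closing and main obstacle.} Convexity of $g_{f,\rho}$ in the subgradient form $g_{f,\rho}(a)\ge g_{f,\rho}(b)+g_{f,\rho}'(b)(a-b)$ with left derivative $g_{f,\rho}'$ and $b=g_{f,\rho}^{-1}(1-\alpha)$ (using $a\le b$) plugs the previous inequality into the Jensen bound and produces the stated estimate, conditional on $\hat w^{(k)}$; averaging over $\hat w^{(k)}$ gives the marginal bound. The main obstacle I anticipate is producing the sharp constant $1/2$: the mean-zero centering trick is easy to miss and a naive $L^1$ bound would lose a factor of two. A secondary subtle point is that the argument genuinely exploits convexity (rather than just the monotonicity highlighted in the remark after Theorem~\ref{thm:cov_known_shift}) in two places — Jensen's inequality and the subgradient step — so the proof must either cite or verify this convexity for the $f$-divergences of interest.
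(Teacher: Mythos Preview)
Your proposal is correct and follows essentially the same route as the paper's proof: condition on the data used to fit $\hat w^{(k)}$, invoke the weighted conformal guarantee of \citet{tibshirani2019conformal} in the pseudo-true world $\tilde Q_X\times P_{Y\mid X}$, pay an $L^1$/TV cost to pass from $\tilde Q_X$ to $Q_X$, and then apply the data-processing/Jensen/subgradient chain from Theorem~\ref{thm:cov_known_shift}. Your ``mean-zero centering trick'' for the $1/2$ is just an explicit rederivation of the total-variation identity $D_{\rm TV}(\tilde Q_X\,\|\,Q_X)=\tfrac12\,\EE_{P_X}|\tilde w-w|$ that the paper invokes directly; and the convexity of $g_{f,\rho}$ you flag as a secondary point is exactly what the paper imports from Lemma~\ref{lem:property} (restating \citet[Lemma~A.1]{cauchois2023robust}). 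One presentational wrinkle to tighten in the write-up: your $\phi(x)$ is random in $\cD_\tr^{(1)}$, and the Tibshirani guarantee is only marginal over that randomness, so the Jensen step should be taken jointly over $X_{n+j}$ and $\cD_\tr^{(1)}$ (or applied twice); this is exactly how the paper handles it, and your argument goes through once this is made explicit.
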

The proof of Theorem~\ref{thm:cov_est} is based on the coupling 
technique used in~\citet{lei2021conformal}, and can be found in 
Appendix~\ref{appx:proof_cov_est}.

\begin{remark}
If the number of test units $m$ is small, one can replace 
$\cD_\test^{(1-k)}$ with  $\cD_\test\backslash \{X_{n+j}\}$ when estimating $w(x)$, 
i.e., train the classifier on $\cD_\tr^{(0)} \cup 
\cD_\test \backslash \{X_{n+j}\}$. This approach can
improve the accuracy of the classifier but may be
computationally intensive when $m$ is large, so we present 
the sample-splitting version for simplicity.
\end{remark}

With estimated $w(x)$, Theorem~\ref{thm:cov_est} 
suggests that the miscoverage rate inflation 
depends on the estimation error of $\hat{w}(x)$.
In general, when $x$ is low-dimensional, we can obtain
a relatively accurate estimator of $w(x)$, and the resulting 
prediction interval is approximately valid. In other situations where 
high-dimensional covariates are present, estimating 
$w(x)$ can be challenging. To handle this issue, we propose 
an alternative method that leverages the debiasing technique 
to construct efficient prediction intervals. We present it in detail 
in the following section.

\begin{algorithm}[htbp]
\caption{Weighted robust conformal prediction (WRCP)}
\label{alg:wrcp}
\KwIn{Training set $\cD_\tr = \{(X_i,Y_i)\}_{i=1}^n$; 
test data $\cD_\test = \{X_{n+j}\}_{j=1}^m$;  
regression algorithm $\mathcal{A}$; 
classification algorithm $\cC$;
target miscoverage level $\alpha \in (0,1)$; 
score function $s(x,y;\mu)$;
robust parameter $\rho$.\\
\textbf{Optional input:} likelihood ratio function $w(x)$. }
\vskip 1em
Randomly split $\cD_\tr$ into 
two disjoint subsets of equal sizes, $\cD^{(0)}_\tr$ 
and  $\cD^{(1)}_\tr$, indexed by $\cI_\tr^{(0)}$ and 
$\cI_\tr^{(1)}$, respectively\;
\vskip 0.2em 
Apply $\mathcal{A}$ to $\cD^{(0)}_\tr$ and 
obtain the prediction function: 
$\hat \mu \leftarrow \mathcal{A}(\cD^{(0)}_\tr)$\;
\vskip 0.2em
Compute the nonconformity score 
$S_i=s(X_i, Y_i)$ for $ i \in \cI^{(1)}_\tr$\;
\vskip 0.2em
\uIf{$w(x)$ exists}{
\For{$j = 1,\ldots,m$}{
Construct 
$\hat{C}_{f,\rho,n+j}(X_{n+j})$ according to~\eqref{eq:pi}\;
}
}\Else{
Split $\cD_\test$ into two disjoint subsets of 
equal sizes, $\cD_\test^{(0)}$ and $\cD_\test^{(1)}$, 
indexed by $\cI_\test^{(0)}$ and $\cI_\test^{(1)}$, respectively\;
\vskip 0.2em
\For{$k=0,1$}{
Train a classifier:
$\hat{\PP}^{(k)}(A = 1 \given X=x) \leftarrow 
\cC(\cD_\tr^{(0)},\cD_\test^{(1-k)})$\;
Construct the estimator  
$\hat{w}^{(k)}(x) \leftarrow \frac{\hat{\PP}^{(k)}(A = 1 \given X = x)}
{1-\hat{\PP}^{(k)}(A=1\given X=x)}$\;
\For{$\ell \in \cI_{\textnormal{test}}^{(k)}$}{
Construct 
$\hat{C}_{f,\rho,\ell}(X_{\ell})$ according to~\eqref{eq:pi}
with $w(x)$ replaced by $\hat{w}^{(k)}(x)$\;
}
}
}
\vskip 1em

\KwOut{Prediction sets $\{\hat{C}_{f,\rho,n+j}(X_{n+j})\}_{j\in[m]}$.}
\end{algorithm}

\subsection{Doubly robust prediction sets}
Continue focusing on the test unit $n+j \in \cI_\test^{(1)}$.
Recall that we fit $\hat{w}^{(1)}(x)$ on $\cD_\tr^{(0)}\cup \cD_\test^{(0)}$; 
we now reuse $\cD_\tr^{(0)}$ to fit the function 
$x \mapsto \EE_{P}\big[\ind\{s(X,Y) \le t \} \biggiven X  = x\big]$, 
denoting the estimator by $\hat{m}^{(1)}(x;t)$.
Since our estimand is the conditional cumulative distribution function 
(CDF), we assume the estimator $\hat{m}(x;t)$ to be bounded in $[0,1]$, 
non-decreasing in $t$, and right-continuous without loss of generality.

To motivate the doubly robust prediction set, let us take another look at  
the coverage probability under a pure covariate shift 
at a fixed threshold $t$, which can be written as
\$
& \PP_{(X,Y)\sim Q_X\times P_{Y\given X}}\big(s(X,Y) \le t \big) \\
=~&  \EE_{(X,Y) \sim Q_X \times P_{Y\given X}}\big[\ind\{s(X,Y) \le t\} \big]\\
=~& \EE_{(X,Y) \sim Q_X \times P_{Y\given X}}
\big[\big(\ind\{s(X,Y) \le t\} - \hat{m}^{(1)}(X;t)\big)\big]
+ \EE_{X\sim Q_X}\big[\hat{m}^{(1)}(X;t)\big]\\
= &\frac{\EE_{(X,Y) \sim P_{X,Y}}\big[w(X) \cdot \big( 
\ind\{S(X,Y) \le t\} - \hat{m}^{(1)}(X;t)\big)\big]}
{\EE_{X\sim P_X}[w(X)]}
+ \EE_{X\sim Q_X}\big[\hat{m}^{(1)}(X;t)\big].
\$
In the above decomposition, 
the first term can be estimated with the training data, and the second 
term with the 
test data. We therefore modify the coverage probability estimator at 
threshold $t$ to be 
\$
\hat{p}^{(1)}(t) = \frac{\sum_{i \in \cI_\tr^{(1)}}\hat{w}^{(1)}(X_i)\cdot 
\big(\ind\{S_i \le t\} - \hat{m}^{(1)}(X_i;t)\big)} 
{\sum_{i \in \cI_\tr^{(1)}} \hat{w}^{(1)}(X_i)} + 
\frac{1}{|\cI_{\test,j}^{(1)}|}\sum_{i \in \cI_{\test,j}^{(1)}} 
\hat{m}^{(1)}(X_j;t),
\$
where $\cI^{(1)}_{\test,j} = \cI_{\test}^{(1)}\backslash \{j\}$.
Note that $\hat{p}^{(1)}(t)$ is no longer monotone in $t$; to obtain the 
quantile, we consider a ``monotonized'' version of $\hat{p}^{(1)}(t)$. The specific prediction interval is then constructed as
\@ 
\label{eq:dr_pi}
\hat{C}^{\text{DR}}_{f,\rho, n+j}(X_{n+j}) = \{y: s(X_{n+j},y) \le \hat{q}\}, \text{ where }
\hat{q} = \inf\Big\{t \in \RR: \inf_{t'\ge t} 
\hat{p}^{(1)}(t') \ge g^{-1}_{f,\rho}(1-\alpha)\Big\}. 
\@

\begin{algorithm}[htbp]
\caption{Debiased weighted robust conformal prediction (D-WRCP)}
\label{alg:dr_wrcp}
\KwIn{Training set $\cD_\tr = \{(X_i,Y_i)\}_{i=1}^n$; 
test data $\cD_\test = \{X_{n+j}\}_{j=1}^m$\; 
regression algorithm $\mathcal{A}$; 
classification algorithm $\cC$;
conditional CDF fitting algorithm $\cM$\;
target miscoverage level $\alpha \in (0,1)$; 
score function $s(x,y;\mu)$;
robust parameter $\rho$.}
\vskip 1em
Randomly split $\cD_\tr$ into 
two disjoint subsets of equal sizes, $\cD^{(0)}_\tr$ 
and  $\cD^{(1)}_\tr$, indexed by $\cI_\tr^{(0)}$ and 
$\cI_\tr^{(1)}$, respectively\;
\vskip 0.2em 
Randomly split $\cD_\test$ into two disjoint subsets of 
equal sizes, $\cD_\test^{(0)}$ and $\cD_\test^{(1)}$, 
indexed by $\cI_\test^{(0)}$ and $\cI_\test^{(1)}$, respectively\;
\vskip 0.4em 
\For{$k=0,1$}{
Obtain the prediction function: 
$\hat{\mu}^{(k)} \leftarrow \mathcal{A}(\cD^{(1-k)}_\tr)$\;
\vskip 0.2em
Compute the nonconformity score 
$S_i=s(X_i, Y_i; \hat{\mu}^{(k)})$ for $ i \in \cI^{(k)}_\tr \cup \cI^{(k)}_\test$\;
\vskip 0.2em
Train a classifier
$\hat{\PP}^{(k)}(A = 1 \given X=x) \leftarrow 
\cC(\cD_\tr^{(1-k)},\cD_\test^{(1-k)})$\;
\vskip 0.2em
Construct the estimator for covariate shift 
$\hat{w}^{(k)}(x) \leftarrow \frac{\hat{\PP}^{(k)}(A = 1 \given X = x)}
{1-\hat{\PP}^{(k)}(A=1\given X=x)}$\;
\vskip 0.2em
Obtain the estimated 
conditional CDF of $S$: 
$\hat{m}^{(k)} \leftarrow \cM(\cD_\tr^{(1-k)})$\;

\vskip 0.2em
\For{$\ell \in \cI_{\textnormal{test}}^{(k)}$}{
Construct 
$\hat{C}^{\textnormal{DR}}_{f,\rho,\ell}(X_{\ell})$ 
according to~\eqref{eq:dr_pi};
}
}
\vskip 1em

\KwOut{Prediction sets $\big\{\hat{C}^{\textnormal{DR}}_{f,\rho,n+j}(X_{n+j})\big\}_{j\in[m]}$.}
\end{algorithm}

The complete procedure for constructing the doubly robust prediction sets
is described in Algorithm~\ref{alg:dr_wrcp}.
Intuitively, when $\hat{p}^{(1)}(t)$ is sufficiently close to 
$\PP_{(X_{n+j},Y_{n+j}) \sim Q_X\times P_{Y\given X}}(s(X_{n+j},Y_{n+j}) \le t)$, 
$\hat{q}$ is close to the $g^{-1}_{f,\rho}(1-\alpha)$-th quantile under 
$Q_X\times P_{Y \given X}$, thereby upper bounding the $(1-\alpha)$-th quantile 
of $S_{n+j}$ under $Q_{X,Y}$.
In the following, we let $q^*(\xi)$ be the 
$(g^{-1}_{f,\rho}(1-\alpha) - \xi)$-th quantile of 
$s(X,Y)$ under $Q_X\times P_{Y\given X}$.
The validity of $\hat{C}^{\text{DR}}_{f,\rho, n+j}(X_{n+j})$ is established  
in the following theorem.
\begin{theorem}
\label{thm:dr_pi}
For any $k \in \{0,1\}$, assume that 
\begin{enumerate}
\item [(1)]  $\hat{w}^{(k)}(x) \le w_{\max} \cdot \EE_{P_X}[\hat{w}^{(k)}(X)]$;
\item [(2)] $\hat{m}^{(k)}(x;t) \in [0,1]$ is non-decreasing and right-continuous in $t$.
\end{enumerate}
Denote the product estimation error by 
\$ 
\textnormal{EstErr}^{(k)}(t) =
\big\|m^{(k)}(X;t) - \hat{m}^{(k)}(X;t)\big\|_{L_2(P)}
\cdot \bigg\|\frac{\hat{w}^{(k)}(X)}{\EE[\hat{w}^{(k)}(X)]}
- w(X)\bigg\|_{L_2(P)}, 
\$
where $m^{(k)} := \PP_{Y \given X\sim P_{Y\given X}}(s(X,Y;\hat \mu^{(k)}) \le t \given X=x)$, 
$\|\cdot \|_{L_2(P)}$ denotes the $L_2$-norm under 
$P$, and the expectation is taken conditional on $\cD_\tr^{(1-k)}$
and $\cD_\test^{(1-k)}$.
For a unit $n+j \in \cI_\test^{(k)}$, there is 
\$  
& \PP_{(X_{n+j},Y_{n+j})\sim Q_{X,Y}}\big(Y_{n+j} \in \hat{C}_{f,\rho,n+j}^{\textnormal{DR}}(X_{n+j}) \biggiven 
\cD_\tr^{(1-k)}, \cD_{\test}^{(1-k)}\big)\\
\ge~&  g_{f,\rho}\big(g^{-1}_{f,\rho}(1-\alpha)\big) - 
g'_{f,\rho}\big(g^{-1}(1-\alpha)\big) 
\times \bigg\{\sup_{t \in \cT(\alpha)} 2\cdot\textnormal{EstErr}^{(k)}(t)
+ \sqrt{
\frac{16w_{\max}^2}{|\cI_\tr^{(k)}|}
+ \frac{2}{|\cI_{\test,j}^{(k)}|}}
\bigg\}, 
\$
where $g'_{f,\rho}$ is the left derivative of $g_{f,\rho}$, and 
$\cT(\alpha) = [\underline{q},\bar{q}]$ is a neighborhood around 
the $g_{f,\rho}^{-1}(1-\alpha)$-th quantile under $Q_X\times P_{Y\given X}$ with
\$ 
\underline{q} = \sup_{0 \le t \le q^*(0)} \esterr(t) + 
\sqrt{
\frac{9w_{\max}^2}{|\cI_\tr^{(k)}|}
+ \frac{1}{|\cI_{\test,j}^{(k)}|}}, 
\quad 
\bar{q} = q^*(0).
\$
When $g_{f,\rho}(1) \ge 1-\alpha$, we further have 
\$  
& \PP_{(X_{n+j},Y_{n+j})\sim Q_{X,Y}}\big(Y_{n+j} \in \hat{C}_{f,\rho,n+j}^{\textnormal{DR}}(X_{n+j})
\biggiven \cD^{(1-k)}_{\tr}, \cD_{\test}^{(1-k)}\big)\\
\ge~&  1-\alpha - 
g'_{f,\rho}\big(g^{-1}(1-\alpha)\big) 
\times \bigg\{\sup_{ 
t \in \cT(\alpha)} 2\cdot \textnormal{EstErr}^{(k)}(t)
+ \sqrt{
\frac{16w_{\max}^2}{|\cI_\tr^{(k)}|}
+ \frac{2}{|\cI_{\test,j}^{(k)}|}}
\bigg\}. 
\$
\end{theorem}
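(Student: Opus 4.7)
The plan is to reduce the worst-case coverage statement to a high-probability lower bound on $F^{*}(\hat{q})$, where $F^{*}(t) := \PP_{(X,Y)\sim Q_X \times P_{Y\given X}}(s(X,Y)\le t)$. By applying the conditional $f$-divergence bound defining $\cP(\rho;P)$ pointwise in $x$ and integrating against $Q_X$ (as in the proof of Theorem~\ref{thm:cov_known_shift}), the coverage under the worst-case $Q_{Y\given X}$ at any deterministic threshold $t$ is at least $g_{f,\rho}(F^{*}(t))$. Hence, once I obtain $F^{*}(\hat{q})\ge g^{-1}_{f,\rho}(1-\alpha) - \Delta$ for an appropriate slack $\Delta$, concavity of $g_{f,\rho}$ (its left derivative is non-increasing) yields
\begin{align}
g_{f,\rho}\bigl(g^{-1}_{f,\rho}(1-\alpha) - \Delta\bigr) \ge g_{f,\rho}\bigl(g^{-1}_{f,\rho}(1-\alpha)\bigr) - g'_{f,\rho}\bigl(g^{-1}_{f,\rho}(1-\alpha)\bigr)\,\Delta,
\end{align}
which matches the target bound; the simplified second form then follows from $g_{f,\rho}(g^{-1}_{f,\rho}(1-\alpha))\ge 1-\alpha$ under the hypothesis $g_{f,\rho}(1)\ge 1-\alpha$.

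Next I would condition on $(\cD_\tr^{(1-k)}, \cD_\test^{(1-k)})$ so that $\hat{w}^{(1-k)}$ and $\hat{m}^{(1-k)}$ become deterministic nuisance functions, and use the doubly robust decomposition
\begin{align}
\hat{p}^{(k)}(t) - F^{*}(t) = T_1(t) + T_2(t) + B(t),
\end{align}
where $T_1(t)$ is a centered training-side weighted average of $\ind\{S_i\le t\} - \hat{m}^{(1-k)}(X_i;t)$, $T_2(t)$ is a centered test-side average of $\hat{m}^{(1-k)}(X_\ell;t)$, and $B(t)$ is the cross-product bias
\begin{align}
B(t) = \EE_{X\sim P_X}\Big[\Big(\tfrac{\hat{w}^{(1-k)}(X)}{\EE[\hat{w}^{(1-k)}(X)]} - w(X)\Big)\big(\PP(s(X,Y)\le t\given X) - \hat{m}^{(1-k)}(X;t)\big)\Big].
\end{align}
Cauchy--Schwarz gives $|B(t)| \le \esterr^{(k)}(t)$. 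For the remaining terms, the summands are independent, mean zero, and bounded by $w_{\max}$ and $1$ respectively, so Hoeffding in expectation yields $\EE\,T_1(t)^2 \lesssim w_{\max}^2/|\cI_\tr^{(k)}|$ and $\EE\,T_2(t)^2 \lesssim 1/|\cI_{\test,j}^{(k)}|$, while a side concentration compares the random normalization $\sum_i \hat{w}^{(1-k)}(X_i)$ to $|\cI_\tr^{(k)}|\cdot\EE[\hat{w}^{(1-k)}(X)]$ and contributes the remaining constant absorbed into the $16$ in the square root. Because $\{t\mapsto \ind\{s(\cdot)\le t\}\}$ is a VC-$1$ class and $\hat{m}^{(1-k)}(\cdot;t)$ is monotone in $t$, pointwise bounds lift to uniform control in $t$ at the cost of absolute constants.

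To finish, I would monotonize by setting $\tilde{p}^{(k)}(t) := \inf_{t'\ge t}\hat{p}^{(k)}(t')$, so $\hat{q} = \inf\{t:\tilde{p}^{(k)}(t)\ge g^{-1}_{f,\rho}(1-\alpha)\}$ and $\hat{p}^{(k)}(\hat{q})\ge \tilde{p}^{(k)}(\hat{q})\ge g^{-1}_{f,\rho}(1-\alpha)$ by right-continuity. On the event that the uniform deviation of $\hat{p}^{(k)}$ from $F^{*}$ on $[0,q^{*}(0)]$ is below the slack defining $\underline{q}$, monotonicity of $F^{*}$ and $F^{*}(q^{*}(0))\ge g^{-1}_{f,\rho}(1-\alpha)$ force $\tilde{p}^{(k)}(q^{*}(0))\ge g^{-1}_{f,\rho}(1-\alpha)$, hence $\hat{q}\le q^{*}(0)=\bar{q}$; the lower bound $\hat{q}\ge \underline{q}$ follows symmetrically from the slack in the definition of $\underline{q}$, so $\hat{q}\in\cT(\alpha)$. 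A second application of the uniform bound---now over $\cT(\alpha)$---combined with $\hat{p}^{(k)}(\hat{q})\ge g^{-1}_{f,\rho}(1-\alpha)$ gives $F^{*}(\hat{q})\ge g^{-1}_{f,\rho}(1-\alpha)-\Delta$ with $\Delta$ equal to the bracketed quantity in the theorem. The factor $2$ in front of $\esterr^{(k)}$ reflects that the cross-bias enters once in certifying $\hat{q}\in\cT(\alpha)$ and once in converting $\hat{p}^{(k)}(\hat{q})$ back to $F^{*}(\hat{q})$.

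The main obstacle will be propagating the uniform-in-$t$ concentration through both the monotonization and the random normalization $\sum_i \hat{w}^{(1-k)}(X_i)$ without degrading the doubly robust rate. The conceptually delicate step is invoking sample-splitting to freeze $\hat{w}^{(1-k)}$ and $\hat{m}^{(1-k)}$ \emph{before} the Cauchy--Schwarz bound on $B(t)$; otherwise the decomposition would not factor as a product of $L_2$ errors and the $\esterr^{(k)}(t)$ bound would collapse to a sum of single estimation errors. The remaining bookkeeping needed to land on the exact constants $16$ and $2$ inside the square root is tedious but routine.
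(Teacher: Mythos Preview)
Your overall architecture is sound and the doubly robust decomposition $\hat{p}^{(k)}(t)-F^{*}(t)=T_1(t)+T_2(t)+B(t)$ with Cauchy--Schwarz on $B(t)$ is exactly what is needed. However, two points deserve correction or elaboration.

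\textbf{A genuine error.} You invoke \emph{concavity} of $g_{f,\rho}$ to obtain $g_{f,\rho}(\beta-\Delta)\ge g_{f,\rho}(\beta)-g'_{f,\rho}(\beta)\,\Delta$. This is backwards: $g_{f,\rho}$ is \emph{convex} in $\beta$ (Lemma~\ref{lem:property}), and it is precisely convexity, via the supporting-hyperplane inequality $g(y)\ge g(x)+g'(x)(y-x)$ at $y=x-\Delta$, that gives the direction you want. Were $g_{f,\rho}$ concave, the inequality would be reversed and the argument would fail. Convexity is also what licenses the Jensen step you need to pass from conditional coverage given $(X_{n+j},\cD_\tr^{(k)},\cD_{\test,j}^{(k)})$ to marginal coverage under $Q_X\times P_{Y\given X}$; you do not mention this step, but it is essential.

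\textbf{A different route.} Your plan controls $\hat{p}^{(k)}(t)-F^{*}(t)$ \emph{uniformly} in $t$ via VC/monotone-class empirical process bounds, then reads off $F^{*}(\hat{q})$ on a good event. The paper instead works \emph{pointwise}: it fixes $\bar t = q^{*}(\Delta)-\varepsilon$, bounds $\PP(\hat p(\bar t)\ge g^{-1}_{f,\rho}(1-\alpha))$ by a Chernoff/MGF argument (sub-Gaussianity of each piece, including the random normalizer $\sum_i\tw(X_i)-1$), and obtains a high-probability lower bound $\PP(\hat q<q^{*}(\Delta(\delta)))\le\delta$. It then converts this to the required \emph{expectation} bound by peeling over $\delta\in\{2^{-\ell}\}_{\ell\ge 0}$: summing $\sum_\ell 2^{-\ell}\Delta(2^{-\ell})$ is what produces the constants $16$ and $2$ inside the square root (from $\sum_\ell 2^{-\ell}\sqrt{\ell}$ times the base constants $9$ and $1$). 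Your uniform approach can be made to work, but you have not explained how to pass from the good-event/high-probability statement to the marginal bound in the theorem; simply saying ``on the event that the uniform deviation is below the slack'' leaves the complement event unaccounted for. You would either need to integrate the tail of the uniform deviation (which effectively reproduces the peeling), or accept different constants from a direct empirical-process maximal inequality. Either way, the specific constants in the statement arise from the Chernoff-plus-peeling machinery, not from a VC argument.
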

The proof of Theorem~\ref{thm:dr_pi} is deferred to Appendix~\ref{appx:proof_dr}, 
where we prove a more general result that the prediction set is 
valid with high probability conditional on the training data; 
we then show how the general result implies Theorem~\ref{thm:dr_pi}.

Theorem~\ref{thm:dr_pi} implies that the miscoverage rate of 
$\hat{C}^{\textnormal{DR}}(X_{n+j})$ is the product of the local 
estimation error
plus an $O(n^{-1/2})$ term, where the product term is 
small if either $\hat{w}$ is approximately {\em proportional} to $w$, or 
if $\hat{m}(x ;t)$ is close to $m(x;t)$ in the 
neighborhood of the $g^{-1}_{f,\rho}(1-\alpha)$-th quantile under 
$Q_X\times P_{Y\given X}$. 
Compared with the double robustness result of~\citet{yang2022doubly}, 
our dependence on the estimation error of $\hat{m}(x;t)$ is local 
(around the $g^{-1}(1-\alpha)$-th quantile) while that of~\citet{yang2022doubly}
is global (for all $t$). This is achieved through the ``monotonization'' step, 
an idea that also appears in~\citet{gui2023conformalized}.

\subsection{Choice of the robust parameter $\rho$}
Another important piece of our procedure is the 
robust parameter 
$\rho$. Choosing $\rho$ is a common challenge in the distributionally 
robust learning literature 
(see e.g.,~\citet{rahimian2019distributionally,cauchois2023robust,
si2023distributionally,mu2022factored} and the references therein).
In certain applications, users can specify an appropriate $\rho$
with context-dependent knowledge. When the choice of $\rho$ is
not clear a priori, we provide two solutions based on the 
proposal of~\citet{si2023distributionally}:
\begin{enumerate}
\item [(1)] If there is (a small amount of) supervised data from 
target distribution, i.e., $Q_{X,Y}$, one can estimate an upper bound 
of $\rho$, and use the estimator in place of $\rho$.
\item [(2)] When no supervised data in the target distribution is 
available, we can apply the procedure with a sequence of  
$\rho$, obtaining a sequence of prediction sets. Each value of $\rho$ 
corresponds to a certain level of robustness, and the user can trade off 
between the level of robustness and efficiency (e.g., the length of the 
prediction interval).
\end{enumerate}

In the case where $\rho$ is estimated, Theorem~\ref{thm:est_rho} 
characterizes the coverage guarantee of WRCP. We only present the 
result for WRCP here for simplicity; the result extends also to 
D-WRCP.

\begin{theorem}\label{thm:est_rho}
Under the same assumptions of Theorem~\ref{thm:cov_est}, 
suppose that $\hat{\rho}$ is 
independent of $(\cD_\tr,\cD_\test)$. 
Denote $\rho^* \,:=\, \textnormal{ess\,sup}_x ~D_f(Q_{Y\given X=x} 
\,\|\, P_{Y\given X=x})$. 
Then for $k \in \{0,1\}$ and any $n+j \in \cI_\test^{(k)}$, 
the prediction interval produced by Algorithm~\ref{alg:wrcp}  
with the robust parameter taken to be $\hat{ \rho}$
satisfies
\begin{align}
\mathbb{P}\big(Y_{n+j} \in \hat{C}_{f,\hat{\rho},n+j}(X_{n+j})\big)
& \ge g_{f,\rho^*}\big(g^{-1}_{f,\hat{\rho}}(1-\alpha)\big) 
- \frac{1}{2}g_{f,\rho^*}'\big(g^{-1}_{f,\hat{\rho}}(1-\alpha)\big)
\cdot \mathbb{E}_{X\sim P_X}\bigg[\Big|\frac{\hat{w}^{(k)}(X)}
{\EE[\hat{w}^{(k)}(X)]}-w(X)\Big|\bigg],
\end{align}
where $g'_{f,\rho}$ is the left derivative of $g_{f,\rho}$.
Furthermore, if $g_{f,\hat{\rho}}(1) \ge 1-\alpha$ and 
$\hat{\rho}\ge \rho^*$,
then 
\$ 
\mathbb{P}\big(Y_{n+j}\in \hat{C}_{f,\hat{\rho},n+j}(X_{n+j})\big)\geq 
1-\alpha
- \frac{1}{2}g_{f,\rho^*}'\big(g^{-1}_{f,\hat{\rho}}(1-\alpha)\big)
\cdot \mathbb{E}_{X\sim P_X}\bigg[\Big|\frac{\hat{w}^{(k)}(X)}
{\EE[\hat{w}^{(k)}(X)]}-w(X)\Big|\bigg],
\$
where $g_{f,\rho}'$ is the left derivative of $g_{f,\rho}$.
\end{theorem}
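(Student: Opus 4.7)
The plan is to reduce Theorem~\ref{thm:est_rho} to a conditional application of Theorem~\ref{thm:cov_est}, with careful bookkeeping of which $\rho$ appears where. First I would condition on $\hat{\rho}$: since $\hat{\rho}\indep (\cD_\tr,\cD_\test)$, the distribution of everything downstream of the calibration is that of Algorithm~\ref{alg:wrcp} run with the deterministic parameter $\hat{\rho}$. This means the algorithm constructs the prediction set at the weighted-quantile level $\tau := g_{f,\hat{\rho}}^{-1}(1-\alpha)$, producing a data-dependent threshold $\hat t$. The novelty compared to Theorem~\ref{thm:cov_est} is that the inner/outer $\rho$'s no longer coincide: the level $\tau$ involves $\hat{\rho}$ while the $f$-divergence constraint satisfied by the true $Q$ is (essentially) $\rho^*$.

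Next I would re-trace the proof of Theorem~\ref{thm:cov_est} until just before the last monotonicity step. The coupling argument of \citet{lei2021conformal} only involves the covariate-shift estimator $\hat w^{(k)}$ and the quantile level $\tau$; it is insensitive to the true conditional shift. It yields, for the threshold $\hat{t}$ output by Algorithm~\ref{alg:wrcp} with parameter $\hat{\rho}$,
\[
\PP_{(X_{n+j},Y_{n+j})\sim Q_X\times P_{Y\given X}}\big(S_{n+j}\le \hat t\big)
\;\ge\; \tau - \tfrac{1}{2}\,\EE_{X\sim P_X}\!\left[\Big|\tfrac{\hat w^{(k)}(X)}{\EE[\hat w^{(k)}(X)]} - w(X)\Big|\right].
\]
To convert a statement about $Q_X\times P_{Y\given X}$ into one about $Q_{X,Y}$, I would invoke the defining property of $g_{f,\rho^*}$: for every measurable set $A$, $P_X$-almost surely $Q_{Y\given X=x}(A)\ge g_{f,\rho^*}\big(P_{Y\given X=x}(A)\big)$, because $D_f(Q_{Y\given X=x}\|P_{Y\given X=x})\le \rho^*$ by definition of the essential supremum. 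Applying this to the level set $A(x)=\{y:s(x,y)\le \hat t\}$, then integrating in $X\sim Q_X$ and using the concavity/monotonicity of $g_{f,\rho^*}$ exactly as in the proof of Theorem~\ref{thm:cov_known_shift}, yields the first claimed bound with $g_{f,\rho^*}$ (not $g_{f,\hat\rho}$) playing the outer role.

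For the ``furthermore'' part, I would exploit that the map $\rho\mapsto g_{f,\rho}(\beta)$ is non-increasing (the feasible set in the infimum grows with $\rho$) and hence $\rho\mapsto g_{f,\rho}^{-1}(\tau)$ is non-decreasing. Under $\hat\rho\ge \rho^*$ this gives $g_{f,\hat\rho}^{-1}(1-\alpha)\ge g_{f,\rho^*}^{-1}(1-\alpha)$, while $g_{f,\hat\rho}(1)\ge 1-\alpha$ implies $g_{f,\rho^*}(1)\ge 1-\alpha$, so the composition $g_{f,\rho^*}\big(g_{f,\hat\rho}^{-1}(1-\alpha)\big)\ge g_{f,\rho^*}\big(g_{f,\rho^*}^{-1}(1-\alpha)\big)\ge 1-\alpha$ by the same reasoning used after Theorem~\ref{thm:cov_known_shift}. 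Plugging this into the first inequality completes the second claim. I expect the main obstacle to be the careful justification of swapping $g_{f,\hat\rho}$ and $g_{f,\rho^*}$ without losing the tight form of the remainder term: the estimation-error coefficient must be $\tfrac12 g'_{f,\rho^*}\big(g_{f,\hat\rho}^{-1}(1-\alpha)\big)$, which requires that the Taylor-style bound used in Theorem~\ref{thm:cov_est} be applied with the outer function $g_{f,\rho^*}$ evaluated at the shifted point $\tau=g_{f,\hat\rho}^{-1}(1-\alpha)$, rather than at $g_{f,\hat\rho}^{-1}$ or $g_{f,\rho^*}^{-1}$; this is where one must check that the monotonicity/concavity properties of $g_{f,\rho}$ used by \citet{cauchois2023robust} hold uniformly in $\rho$ and in the point of evaluation.
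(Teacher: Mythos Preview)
Your proposal is correct and follows essentially the same approach as the paper: condition on $\hat{\rho}$, apply the coupling argument from Theorem~\ref{thm:cov_est} to bound $\PP_{Q_X\times P_{Y\mid X}}(A)\ge g_{f,\hat\rho}^{-1}(1-\alpha)-\tfrac12\EE[|\cdot|]$, then push through $g_{f,\rho^*}$ via the Theorem~\ref{thm:cov_known_shift} argument and linearize with the left derivative; for the ``furthermore'' part the paper uses the slightly more direct route $g_{f,\rho^*}(g_{f,\hat\rho}^{-1}(1-\alpha))\ge g_{f,\hat\rho}(g_{f,\hat\rho}^{-1}(1-\alpha))\ge 1-\alpha$ (monotonicity in $\rho$ at a fixed $\beta$), but your version via $g_{f,\rho}^{-1}$ is equally valid. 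One slip: $g_{f,\rho^*}$ is \emph{convex}, not concave (Lemma~\ref{lem:property}(a)), and it is convexity that makes the Jensen step and the supporting-hyperplane linearization go in the right direction.
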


\section{Application: sensitivity analysis of individual treatment effects}\label{sec:4}
Our framework can be applied to the sensitivity analysis 
of individual treatment effects in the presence of confounding factors.
To set the stage, we follow the potential outcome framework~\citep{neyman1923applications,imbens2015causal}
and suppose that each sample is associated with a set of random variables  
$(X,U,T,Y(0),Y(1))$, where $X\in \cX$ denotes the observed covariates, 
$U \in \cU$ the unobserved confounders, $T \in \{0,1\}$ the binary treatment, and 
$Y(1),Y(0) \in \RR$ the potential outcomes with and without being treated.
Here, not all the quantities are observed --- the observable variables 
are $(X,T,Y)$, where the realized outcome $Y = T Y(1) + (1-T) Y(0)$ 
under the {\em Stable Unit Treatment Value Assumption (SUTVA)}.
Assume that the unobserved confounder $U$ satisfies that\footnote{
Such an assumption can always be achieved 
by taking $U$ to be $(Y(1),Y(0))$.
}
\$ 
(Y(1), Y(0)) \, \indep \, T \given X,U,
\$

Imagine now there is a cohort of $n$ i.i.d samples 
$(X_i,U_i,T_i,Y_i(1),Y_i(0))_{i=1}^n$, where  
we observe $\cD = (X_i,T_i,Y_i)_{i=1}^n$.
For a new individual $X_{n+1}$, we are interested in a prediction 
interval $\hat{C}(X_{n+1})$ 
for individual treatment effect (ITE), $Y(1) -  Y(0)$, such that  
\@\label{eq:ite_pi}
\PP\big(Y(1) -  Y(0) \in \hat{C}(X_{n+1})\big) \ge 1-\alpha.
\@

Without additional constraints on the unobserved confounders, 
it is hopeless to obtain an efficient prediction interval 
achieving~\eqref{eq:ite_pi}, since the difference in the treated and control group 
can be entirely driven by the 
confounding factor. Previously,~\citet{lei2021conformal} 
studies this problem assuming that there are no observed confounders, i.e., 
$(Y(1),Y(0))\, \indep\, T \given X$;~\citet{jin2023sensitivity} 
adopts the marginal $\Gamma$-selection 
model~\citep{tan2006distributional}, which allows for unobserved 
confounders but the influence of 
$U$ --- roughly speaking --- is {\em uniformly} bounded
by a constant $\Gamma$.
The marginal $\Gamma$-selection model can be unsatisfactory 
in some cases, where the influence of $U$ is limited only 
{\em on average} but is unbounded with small probability
(the corresponding constant $\Gamma$ is therefore $+\infty$).
Such a situation can however be well characterized by the 
$f$-sensitivity model~\citep{jin2022sensitivity}:
\begin{definition}[The $(f,\rho)$-selection condition]
Suppose $f: \RR_+ \mapsto \RR$ is  a convex function such that $f(1)=0$, and
$P$ is a distribution over $(X,U,T,Y(1),Y(0))$. $P$ satisfies the 
$(f,\rho)$-selection condition if for $P$-almost all $x$, 
\$
\int f\Big(\frac{e(X)}{1-e(X)} \frac{1 - \bar{e}(X,U)}{\bar{e}(X,U)}\Big) 
\,\textnormal{d} P_{U \given X = x,T=1} \le \rho, \text{ and } 
\int f\Big(\frac{1-e(X)}{e(X)} \frac{\bar{e}(X,U)}{1-\bar{e}(X,U)}\Big) 
\,\textnormal{d} P_{U \given X = x,T=0} \le \rho,
\$
where $\bar{e}(x,u) = P(T=1 \given X = x, U = u)$ and 
$e(x) = P(T=1 \given X = x)$.
\end{definition}

Can we construct a prediction interval achieving~\eqref{eq:ite_pi} under 
the $f$-sensitivity model? 
It turns out that this task is
a special case of our proposed framework. To see this, we first reduce the 
problem to that of inference on the counterfactuals: if we can construct  
valid prediction intervals for $Y(1)$ and $Y(0)$, respectively, then combining these two intervals and taking a union bound yields a valid interval for the ITE.

Without loss of generality, we focus on $Y(1)$, aiming to construct an interval 
$\hat{C}_{f,\rho}(X_{n+1})$ such that $\PP(Y(1) \in \hat{C}_{f,\rho}(X_{n+1})) \ge 
1-\alpha$. Since $Y(1)$ can only be observed for the treated units, the 
training data follows the distribution $P_{Y(1),X \given T=1}$ while 
our target distribution is $P_{Y(1),X}$ --- there exists a distributional shift.
The covariate shift can be computed  as follows
\$ 
w(x) = \frac{\dd P_{X}}{\dd P_{X \given T = 1}}(x) = \frac{\PP(T=1)}{e(x)} 
\propto \frac{1}{e(x)},
\$
which depends only on the observable propensity score and can be estimated with the data.
Next, we consider the distributional shift in $Y(1) \given X$. 
By~\citet[Lemma 1]{jin2022sensitivity}, under the $f$-sensitivity model,  
$D_f(P_{Y(1) \given X,T=0} \,\| \,P_{Y(1)\given X,T=1})\leq \rho$ almost surely. 
Consequently, 
\$ 
D_f\big(P_{Y(1) \given X} \,\|\, P_{Y(1) \given X,T=1}\big) 
& = D_f\big(e(X) \cdot P_{Y(1) \given X, T=1} + 
(1-e(X)) \cdot P_{Y(1)\given X,T=0} \,\|\, P_{Y(1) \given 
X,T=1}\big) \\
& \le (1-e(X))\cdot D_f(P_{Y(1) \given X,T=0} \,\|\, P_{Y(1)\given X,T=1}) \le \rho,
\$
where the inequality follows from the convexity of the $f$-divergence.
By now, it should be clear that the distributional shift in our task consists 
of an estimable covariate shift and a shift in $Y \given X$ bounded in $f$-divergence, 
and therefore fits into the framework of this paper. For completeness, we present the 
adaptation of our main proposal to this specific task of sensitivity analysis, as long as results for other types of estimands in Appendix~\ref{appx:ite}.



\section{Numerical results}
\label{sec:simulation}
In this section, we present several representative settings and leave 
the other results to Appendix~\ref{appx:simulation}.
\subsection{Simulation setup and evaluation metrics}
We empirically compare our proposed methods
\texttt{WRCP} and \texttt{D-WRCP} with the following benchmarks:
\begin{itemize}
\item [-] \texttt{CP}: standard conformal prediction 
designed for exchangeable~data~\citep{vovk2005algorithmic}; 
\item [-] \texttt{WCP}: weighted conformal prediction~\citep{tibshirani2019conformal};
\item [-] \texttt{RCP}: robust conformal prediction~\citep{cauchois2023robust}.
\end{itemize}
For all the five candidate methods, we implement the split version, where 
half of the data is reserved for model fitting and the other half for 
calibration. The nonconformity score $s(x,y) = |y - \hat{\mu}(x)|$ is adopted, 
where we fit $\hat{\mu}(\cdot)$ with cross-validated Lasso~\citep{tibshirani1996regression} 
using the 
\texttt{scikit-learn} package in python~\citep{pedregosa2011scikit}. 
For \texttt{WCP}, \texttt{WRCP} and \texttt{D-WRCP}, 
the covariate likelihood ratio 
$w(x)$ is estimated via the random 
forest classifier~\citep{breiman2001random} in the 
\texttt{scikit-learn} package.
For \texttt{WRCP}, \texttt{D-WRCP}, we use 
the KL divergence to quantify the distributional shift, 
i.e., $f(t) = t\log t$, and consider 
a sequence of robust parameters $\rho$. 
For each $\rho$, the 
corresponding robust parameter of \texttt{RCP}
is chosen as  
$\rho_{\text{RCP}} = \rho + D_{\text{KL}}(Q_X \,\|\, P_X)$ 
by the chain rule of KL divergence, where $D_{\text{KL}}(Q_X \,\|\, P_X)$
is estimated by plugging in the estimated $\hat{w}$.
In the implementation of \texttt{D-WRCP}, 
the conditional CDF is estimated by random forest with 
the python package \texttt{qosa-indices}~\citep{Elie-Dit-Cosaque2020}.
For all methods, the target coverage rate is $0.9$.

In our simulations, we consider $X \in \RR^{50}$ and $Y \in \RR $. 
For the training data,
\$
X \sim \cN(0,I_{50}), \quad 
Y \given X \sim X^\top\beta+\mathcal{N}(0,1)
\$
where $\|\beta\|_0 = 10$ and the nonzero entries take the value $0.47$.
The target covariate distribution has a shifted mean:
\$
Q_X = \cN(\beta_0, I_{50}), \text{ and } \beta_0=(\eta,-\eta,0,\cdots,0),
\$ 
where $\eta$ is a tuning parameter controlling the amount of 
covariate shift; the target $Y\given X$ distribution is specified 
as follows,
\begin{align*}
\frac{\dd Q_{Y \given X}}{\dd P_{Y \given X}}(x)&=
\begin{cases}
0.96 & \text{ if }\big|Y-X^\top\beta\big|<1.86;\\
1.59& \text{ if } \big|Y-X^\top\beta\big|\geq 1.86.
\end{cases}
\end{align*}
By construction, the ground truth 
$\rho^* = D_{\text{KL}}(Q_{Y\given X} \,\|\, P_{Y\given X}) = 0.01$.

We let $\eta$ to be $0.1$, $0.5$, and  $0.8$ --- corresponding to 
low, medium, and high levels of covariate shift respectively.
For each run of under a simulation setting, a training set $\cD_\tr$ and 
a test set $\cD_\test$  are generated,
with $|\cD_\tr| = |\cD_{\test}| = 2000$.
We consider $\rho \in  \{0.005,0.01,\ldots,0.025\}$. For 
each $\rho$, the above experiment is repeated for $N = 100$ runs, and 
for each method, 
we compute the averaged coverage rate and 
prediction interval length
averaged over the $100$ runs and 
50\% of the test samples ($1000$ samples):
\$ 
\widehat{\text{Coverage}} = \frac{1}{100\times 1000}\sum^{100}_{i=1} 
\sum_{j=1}^{1000} \ind\{Y_{ij} \text{ covered}\}, \qquad
\widehat{\text{Length}} = \frac{1}{100 \times 1000}\sum^{100}_{i=1}
\sum^{1000}_{j=1}
\text{Length}_{ij}.
\$
Ideally, a method should have $\widehat{\text{Coverage}} \ge 0.9$
and as small $\widehat{\text{Length}}$ as possible.


\subsection{Simulation results}
Figure~\ref{fig:sim-res} presents the simulation results of all methods.
As expected, \texttt{CP} and \texttt{WCP} fail to achieve the desired coverage
level $0.9$.
\texttt{RCP} is overly conservative since it also considers the worst-case covariate 
shift. Our proposed method \texttt{WRCP} and \texttt{D-WRCP} 
achieve approximate validity for a wide range of 
$\rho$ --- in particular, \texttt{WRCP} and \texttt{D-WRCP} achieve 
almost exact coverage when $\rho = \rho^*$;  
as $\rho$ increases, the coverage remains reasonably close 
to the target level.

The prediction interval length tells a similar story: 
\texttt{CP} and \texttt{WCP} have short prediction intervals 
due to undercoverage; \texttt{RCP} often outputs 
prediction intervals of infinite length (for the purpose 
of illustration, we replace $+\infty$ with $17$ --- an upper bound 
of all the realized lengths --- when plotting the results); 
our methods provide valid and informative prediction intervals.

\begin{figure}[h]
    \centering
    \includegraphics[width = 0.8\textwidth]{./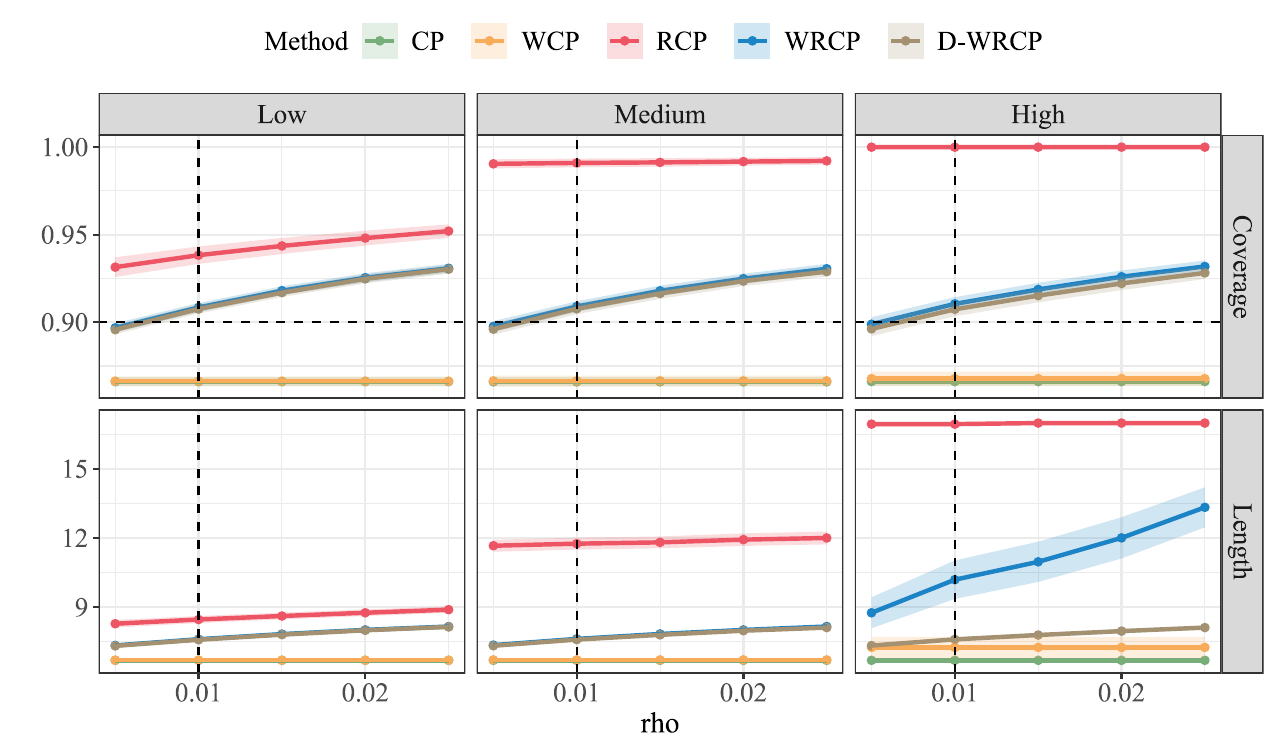}
    \caption{Averaged coverage (top) and prediction interval length (bottom)
    over $N=100$ independent runs as a function of the robust parameter $\rho$, when the amount of covariate shift  is 
    low (left), medium (middle), and high (right).
    The shaded bars correspond to the $95\%$ confidence intervals. 
    The horizontal dashed line corresponds to the target coverage rate $0.9$, 
    and the vertical dashed line is the true robust parameter $\rho^* = 0.01$.
    }
    \label{fig:sim-res}
\end{figure}

\section{Real data applications}
In this section, we evaluate the performance of 
all methods on four real datasets:
the national study of learning mindsets dataset~\citep{carvalho2019assessing}, 
the ACS income dataset~\citep{ding2021retiring}, and 
the covid information study datasets~\citep{pennycook2020fighting,roozenbeek2021accurate}, 
and the poverty mapping dataset~\citep{yeh2020using,wilds2021}. 
The results corresponding to the 
first three datasets are presented in the main text, while the results for the 
last dataset are deferred to Appendix~\ref{appx:real_data}.
For each task, we implement \texttt{WRCP} and \texttt{D-WRCP}, as well 
as the benchmarks \texttt{CP}, \texttt{WCP}, and \texttt{RCP} with the 
target coverage level $80\%$. For \texttt{WRCP} and \texttt{D-WRCP}, 
we choose $f(t) = t\log t$ (the KL-divergence),
and consider a sequence of robust parameter $\rho$'s, 
reporting the averaged coverage rate and prediction set length/cardinality
as a function of $\rho$. For \texttt{RCP}, the robust parameter is 
chosen as $\rho_{\text{RCP}} = \rho + D_{\rm KL}(Q_X\,\|\,P_X)$, where 
$D_{\rm KL}(Q_X \, \| \,P_X) = \EE_{Q_X}[\log(\dd Q_X/\dd P_X)]$ 
is estimated with Monte Carlo with the set-aside training data.

\subsection{National study of learning mindsets}
The  National Study of Learning Mindsets (NSLM)~\citep{yeager2019national,yeager2019national2}
is a randomized study investigating the effect of instilling students with a growth mindset.
Based on the results from NSLM,~\citet{carvalho2019assessing} creates 
an observational study dataset with similar characteristics to the original study. 
The observational study dataset contains $10,\!391$ students, where 
$3,\!384$ received the intervention and $7,\!007$ did not.
For each student, the dataset records 
the treatment status $T$, the outcome $Y$, and 
ten covariates: S3, C1, C2, C3, XC, X1, X2, X3, X4, 
X5.\footnote{The detailed description of the covariates can be found 
in~\citet[Table 1]{carvalho2019assessing}. The original 
dataset also contains the school id, but we do not include
it in our analysis.}
We consider the task of predicting $Y(1)$ in the 
control group, where we wish to construct 
a prediction interval $\hat{C}_{f,\rho}(X)$ such that 
\$ 
\PP\big(Y(1) \in \hat{C}_{f,\rho}(X) \biggiven T = 0\big)\ge 80\%.
\$

Without observing the counterfactuals, 
the validity of a procedure cannot be evaluated.
As an alternative, we create a semi-synthetic data
based on the NSLM dataset. Following the strategy 
of~\citet{carvalho2019assessing},
we generate synthetic potential outcomes from the 
following model:
\$
Y(t)=\mu(x)+\tau(x_1,x_2,c_1) \cdot t+\epsilon, \text{ for }t=0,1.
\$
Above, $x$ denotes all the covariates for a student, and 
$x_1,x_2,c_1$ corresponds to X1, X2 and C1, respectively.
The baseline function 
$\mu$ is obtained by fitting a generalized additive model~\citep{hastie2017generalized} 
on the control arm of the original data, 
and $\epsilon$ is sampled with replacement from 
the sum of the residual from the fitted model on the original data and 
a noise term $\cN(0,0.025)$.
The form of the treatment effect is:
\$
\tau(x_1,x_2,c_1)=0.228 + 0.05\cdot\ind\{x_1 < 0.07\}
- 0.05 \cdot\mathds{1}\{x_2< -0.69\}
-0.08\cdot \mathds{1}\{c_1\in \{1,13,14\}\}.
\$
Confounding is introduced by removing two features, X1 and X2.

We consider a sequence of $\rho \in \{0.001,0.005,0.01,0.015,\ldots,0.04\}$.  
For each run, we randomly select half of the treated units and half of the 
control units for model fitting;
the other half of the treated units are reserved for calibration, while  
the other half of the control units for evaluation.
The nonconformity score function
$s(x,y) = |y - \hat{\mu}(x)|$.
Both the regression function $\hat{\mu}(x)$ 
and the propensity score function $e(x)$ 
are fitted with random forest.
With each $\rho$, we repeat the above process for $100$ random splits.

Figure~\ref{fig:nslm} plots the resulting coverage rate and prediction 
interval length as a function of $\rho$.  
\texttt{CP} and \texttt{WCP} fail to achieve the desired coverage level, 
while \texttt{RCP} is overly conservative, achieving a 
much higher coverage rate than the target level.
Our methods \texttt{WRCP} and \texttt{D-WRCP}
achieve approximate coverage for a wide range of $\rho$, and are much 
more efficient than RCP.

\begin{figure}[htbp]
    \centering
    \includegraphics[width = 0.8\textwidth]{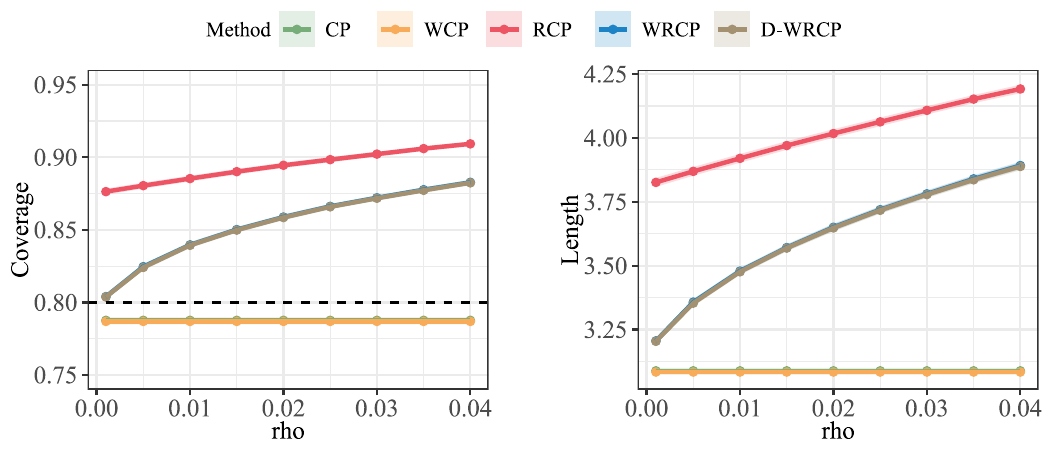}

    \caption{Averaged coverage (left) and prediction interval length (right)
    over $100$ runs as a function of the robust parameter $\rho$ from 
    experiments on the NSLM dataset. The dashed bar corresponds to the 
    $95\%$ confidence interval and  
    the horizontal dashed line corresponds to the target coverage rate $80\%$.}
    \label{fig:nslm}
\end{figure}

\subsection{ACS income dataset}
We further evaluate our procedure for a classification task with 
the ACS income dataset constructed from the US census data~\citep{ding2021retiring}.
The target is to predict whether an individual's annual income is above 
$50,\!000$ dollars. 
The specific dataset we use is the version pre-processed by~\citet{liu2023need},
where we choose the data from New York (NY) as the 
training set, and that from South Dakota (SD) as the target ---
as discussed in~\citet{liu2023need}, both $X$-shift and $Y\given X$-shift 
exist between the training and target population. 
The training  and test set contain $103,\!021$ and 
$4,\!899$ samples, respectively. For each sample, $9$ features are recorded, 
where $3$ of them are continuous and $6$ categorical. After introducing the 
dummy variables, the dimension of $X$ comes to $76$. The response variable 
$Y = \ind\{\text{income} \ge 50,\!000 
\text{ dollars}\}$.

Since $Y$ is binary, we adopt the generalized inverse quantile
conformity score introduced by~\citet{romano2020classification},
and the prediction set is a subset of $\{0,1\}$. 
The weight function $\hat{w}$ is estimated via 
XGBoost~\citep{chen2016xgboost} 
with the hyperparameters provided by~\citet{liu2023need},
and the outcome model $\hat{\mu}$ is fitted with random forest.
The robust parameter $\rho \in \{0.005,0.01,\ldots,0.04\}$.
In each run, we randomly select $2000$ samples from the source 
population, and the same number of samples from the target population. 
For each $\rho$, we repeat the above process over $100$ random splits and report the averaged
coverage and prediction set cardinality.

Figure~\ref{fig:acs} presents the simulation results of all methods.
Ae before, \texttt{CP} and \texttt{WCP} fail to achieve the desired 
coverage level $80\%$; \texttt{WCP} improves upon \texttt{CP} 
because it adjusts for the covariate shift.
\texttt{RCP} is overly conservative, while our methods again 
deliver valid and efficient 
prediction intervals for a wide range of $\rho$'s.
\begin{figure}[htbp]
    \centering
    \includegraphics[width = 0.8\textwidth]{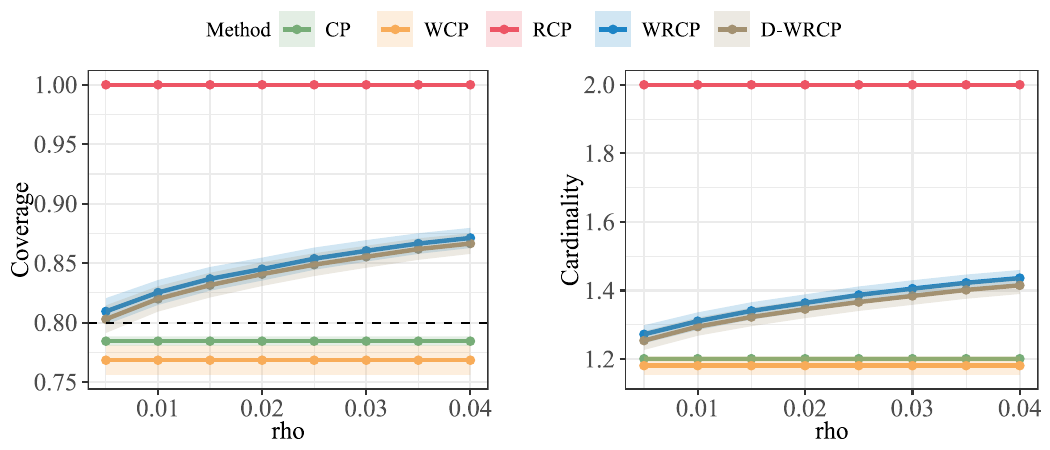}
    \caption{Averaged coverage (left) and prediction set cardinality (right)
    over $100$ runs as a function of the robust parameter $\rho$
    from the experiment on ACS income dataset. The other details are the same as in Figure~\ref{fig:nslm}.} 
    \label{fig:acs}
\end{figure}

\subsection{COVID information studies}
The covid information studies investigate how a ``nudge'' for 
thinking about the accuracy of information can affect the people's
ability to discern fake news when sharing COVID-related headlines. The original 
study of $856$ participants~\citep{pennycook2020fighting} is first conducted, 
followed by a replication study~\citep{roozenbeek2021accurate} of $1,\!583$
participants. The original study found a significant interaction term between 
the intervention and the validity of the headline, while the replication study 
also found a significant interaction, but with a much smaller magnitude (more 
details about the comparison between the two studies can be found in~\citet{jin2023diagnosing}).

As discussed in~\citet{jin2023diagnosing}, the discrepancy between the two 
results can be attributed to the distributional shift in the both 
the covariates and $Y\given X$. Here, instead of estimating the treatment effect,
we consider the task of predicting a participant's rating for willingness to share a headline.
Each sample in the dataset corresponds to a participant, where the outcome is 
the rating for their willingness to share a headline; the predictors include the 
treatment status (i.e., whether a nugde is sent), the validity of the news, and 
$10$ other covariates.\footnote{In the original datasets, each participant was 
asked to rate $30$ headlines. In our analysis, the outcome and the covariates 
are all averaged over the $30$ headlines.}
After removing the samples with missing values, the training and test set 
consist of $811$ and $1,\!583$ samples, respectively.
Each run splits the 
training and test sets into two halves for model fitting and calibration;
The weight function $\hat{w}$ and the outcome model $\hat{\mu}$
are both estimated via random forest. 
The robust parameter $\rho \in \{0.005,0.01,\ldots,0.04\}$, and for each $\rho$
We repeat the above process for $100$ random splits.

Figure~\ref{fig:covid} demonstrates the results of all methods. 
For the purpose of visualization, we replace the infinite prediction interval 
length with $2$ (an upper bound of all the finite realized lengths) when plotting 
the averaged prediction interval length. In this example, we again see 
that \texttt{CP} and \texttt{WCP} fail to achieve the desired coverage level,
with \texttt{WCP} being slightly better than \texttt{CP} due to the adjustment for 
covariate shift. The proposed methods \texttt{WRCP} and \texttt{D-WRCP} achieve 
approximate coverage for a wide range of $\rho$'s, and are much more 
efficient than \texttt{RCP}. 

\begin{figure}[ht]
    \centering
    \includegraphics[width = 0.8\textwidth]{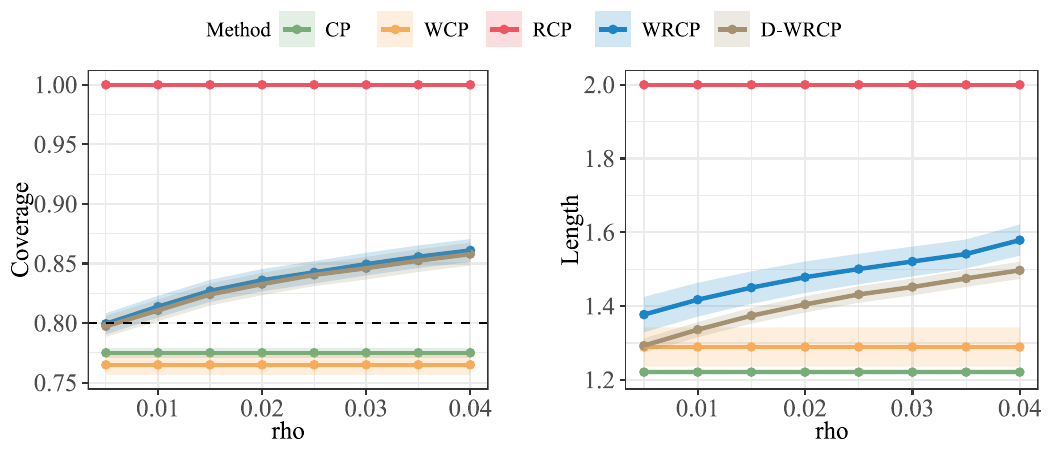}
    \caption{Averaged coverage (left) and prediction interval length (right)
    over $100$ runs as a function of the robust parameter $\rho$
    from the experiment on covid study datasets. 
    The other details are the same as in Figure~\ref{fig:nslm}.}
    \label{fig:covid}
\end{figure}
\section{Discussion}
In this paper, we provide a fine-grained approach to 
quantifying the uncertainty of predictive models by 
distinguishing sources of distributional shift and providing 
different treatments. We propose two new methods, WRCP and its 
debiased version D-WRCP, that achieve 
validity and efficiency under a wide range of distributional shifts, 
as demonstrated in the simulation and real data experiments.
This paper opens up several interesting directions for future work. First, 
it would be interesting to investigate methods for identifying (an upper bound) 
of the robust parameter $\rho$ when we have a small amount of supervised data 
from the target population. Second, can we extend this fine-grained approach 
to other distributional shift models and improves the efficiency of the 
corresponding methodologies? Last but not least, there can be other ways to 
decompose the distributional shifts --- it remains to be understood
the optimal decomposition in different settings and the corresponding 
treatments. 

\subsection*{Reproducibility}
All the numerical results in this paper can be reproduced with the code
available at \url{https://github.com/zhimeir/finegrained-conformal-paper}.

\subsection*{Acknowledgements}
The authors would like to thank Wharton High Performance Computing for 
the computational resources and the support from the staff members.
The authors would also like to thank Ying Jin for the feedback on 
the manuscript.

\bibliographystyle{apalike}
\bibliography{ref}

\newpage
\appendix
\section{Auxiliary lemmas}
\begin{lemma}[Data processing inequality~\citep{cover1999elements}]
\label{lem:dpi}
Let $X,Y,Z$ denote random variables
drawn from a Markov chain in the order 
(denoted by $X \rightarrow Y \rightarrow Z$) that 
the conditional distribution of $Z$
depends only on $Y$ and is conditionally independent of $X$. 
Then if $X \rightarrow Y \rightarrow Z$, we have 
$I(X; Y) \ge I(X;Z)$,
where $I(X;Y)$ is the mutual information between $X$ and $Y$.
\end{lemma}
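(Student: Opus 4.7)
The plan is to prove the data processing inequality via the chain rule for mutual information, together with the non-negativity of conditional mutual information and the definition of the Markov chain $X\to Y\to Z$. This is the textbook approach (as in Cover and Thomas).

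First, I would expand the mutual information $I(X;(Y,Z))$ in two different ways using the chain rule:
\begin{align}
I(X;(Y,Z)) = I(X;Y) + I(X;Z\mid Y) = I(X;Z) + I(X;Y\mid Z).
\end{align}
This identity holds for any triple of random variables and requires no Markov assumption.

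Next, I would invoke the hypothesis that $X\to Y\to Z$ is a Markov chain, meaning the conditional distribution of $Z$ given $(X,Y)$ depends only on $Y$. This is exactly the statement that $X$ and $Z$ are conditionally independent given $Y$, so $I(X;Z\mid Y)=0$. Substituting into the chain-rule identity yields
\begin{align}
I(X;Y) = I(X;Z) + I(X;Y\mid Z).
\end{align}
Finally, conditional mutual information is non-negative (e.g., by Jensen's inequality applied to the log, or directly because the relative entropy between the true joint and the product of the relevant conditionals is non-negative), so $I(X;Y\mid Z)\ge 0$, which gives $I(X;Y)\ge I(X;Z)$.

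There is no real obstacle here; the entire argument is a one-line application of the chain rule. The only conceptual point worth stating carefully is that the Markov condition $X\to Y\to Z$ is equivalent to $X\indep Z\mid Y$, which in turn is equivalent to $I(X;Z\mid Y)=0$; once this is noted, the proof is immediate. If one wanted a slightly more self-contained presentation, I would briefly record the definition $I(X;Z\mid Y)=\EE[D_{\mathrm{KL}}(P_{X,Z\mid Y}\,\|\,P_{X\mid Y}\otimes P_{Z\mid Y})]$ and note that it vanishes precisely under the Markov hypothesis, which makes the chain of equalities and the final inequality transparent.
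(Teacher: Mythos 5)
Your proof is correct and is precisely the standard chain-rule argument from Cover and Thomas, which the paper simply cites without reproducing; there is no separate proof in the paper to compare against, and your derivation (expand $I(X;Y,Z)$ two ways, kill $I(X;Z\mid Y)$ via the Markov condition, use non-negativity of conditional mutual information) is the canonical one.
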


\begin{lemma}[Adapted from Lemma A.1 of~\citet{cauchois2023robust}]
\label{lem:property}
Let $f: \RR \mapsto \RR$ be a closed convex function such that 
$f(1) = 0$ and $f(t) < \infty$ for all $t > 0$. 
The function $g_{f,\rho}(\beta)$ has the following properties.
\begin{itemize}
\item [(a)] $(\rho,\beta)\rightarrow g_{f,\rho}(\beta)$ is a 
convex function and continuous in $\beta \in [0,1]$ and $\rho \in (0,\infty)$.
\item [(b)] $g_{f,\rho}(\beta)$ is non-increasing in $\rho$ and 
non-decreasing in $\beta$. Moreover, for all $\rho > 0$, 
there exists $\beta_0(\rho) \,:=\, \sup \big\{\beta\in (0,1) \given 
g_{f,\rho}(\beta)=0 \big\}$, and $g_{f,\rho}(\beta)$ is 
strictly increasing for $\beta > \beta_0(\rho)$.
\end{itemize}
\end{lemma}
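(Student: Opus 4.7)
My plan is to reduce every claim in the lemma to properties of the ``two-point'' function
\[
h(\beta,z) \,:=\, \beta f\!\left(\tfrac{z}{\beta}\right) + (1-\beta) f\!\left(\tfrac{1-z}{1-\beta}\right), \qquad (\beta,z)\in[0,1]^2,
\]
which represents the $f$-divergence between the Bernoulli laws $(z,1-z)$ and $(\beta,1-\beta)$. The key preliminary fact I would establish is that $h$ is jointly convex in $(\beta,z)$: indeed each summand is the perspective of the convex function $f$, and perspectives of convex functions are jointly convex. Moreover $h(\beta,\beta)=\beta f(1)+(1-\beta)f(1)=0$ and $h\ge 0$ (by Jensen), so $z=\beta$ is always a minimizer of $h(\beta,\cdot)$. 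I would also note that $h(\beta,\cdot)$ is continuous on $[0,1]$ and the sublevel set $\{z:h(\beta,z)\le\rho\}$ is a closed interval containing $\beta$, so the infimum in the definition of $g_{f,\rho}(\beta)$ is actually attained.

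For part (a), convexity of $(\rho,\beta)\mapsto g_{f,\rho}(\beta)$ comes out of this joint convexity by the standard infimal projection argument: given $z_i=g_{f,\rho_i}(\beta_i)$ for $i=1,2$ and any $\lambda\in[0,1]$, joint convexity of $h$ gives
\[
h\!\left(\lambda\beta_1+(1-\lambda)\beta_2,\;\lambda z_1+(1-\lambda)z_2\right)\le\lambda\rho_1+(1-\lambda)\rho_2,
\]
so $\lambda z_1+(1-\lambda)z_2$ is feasible at the convex combination of $(\rho_i,\beta_i)$, proving the inequality. Continuity on the interior of the domain is automatic for a finite convex function; at the boundary $\beta\in\{0,1\}$ I would use the sandwich $0\le g_{f,\rho}(\beta)\le\beta$ (together with its mirror image $g_{f,\rho}(\beta)\ge\beta-(1-g_{f,\rho}(1-\beta))$ by the $(\beta,z)\leftrightarrow(1-\beta,1-z)$ symmetry of $h$) to force $g_{f,\rho}(\beta)\to 0$ as $\beta\to 0^+$ and $g_{f,\rho}(\beta)\to 1$ as $\beta\to 1^-$.

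For part (b), monotonicity in $\rho$ is immediate from set inclusion of the sublevel sets $\{z:h(\beta,z)\le\rho\}$. For monotonicity in $\beta$, the key input is that for each fixed $z$, the map $\beta\mapsto h(\beta,z)$ is convex with its minimum at $\beta=z$ (again because $h(z,z)=0$), hence is non-decreasing on $[z,1]$. Given $\beta_1<\beta_2$ and any $z<g_{f,\rho}(\beta_1)$, we have $z\le\beta_1\le\beta_2$ so applying this monotonicity gives $h(\beta_2,z)\ge h(\beta_1,z)>\rho$, which shows $z\le g_{f,\rho}(\beta_2)$; letting $z\uparrow g_{f,\rho}(\beta_1)$ yields $g_{f,\rho}(\beta_1)\le g_{f,\rho}(\beta_2)$. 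The existence of $\beta_0(\rho)$ is just the definition of a supremum over a nonempty set (since $g_{f,\rho}(0)=0$).

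The hardest step, I expect, is the strict monotonicity on $(\beta_0(\rho),1]$. My plan is as follows. Suppose for contradiction that $g_{f,\rho}(\beta_1)=g_{f,\rho}(\beta_2)=v>0$ for some $\beta_0<\beta_1<\beta_2\le 1$. By convexity and the non-decreasing property already established, $g_{f,\rho}\equiv v$ on $[\beta_1,\beta_2]$. Since $v$ is the attained infimum, $h(\beta,v)=\rho$ for every $\beta\in[\beta_1,\beta_2]$, i.e.\ $\beta\mapsto h(\beta,v)$ is constant on an interval. But $\beta\mapsto h(\beta,v)$ is convex with unique global minimum at $\beta=v<\beta_1$, so the one-sided derivative $\partial_\beta h(\beta,v)=[f(v/\beta)-(v/\beta)f'(v/\beta)]-[f((1-v)/(1-\beta))-((1-v)/(1-\beta))f'((1-v)/(1-\beta))]$ is strictly positive for $\beta>v$, because the function $t\mapsto f(t)-tf'(t)$ is strictly decreasing on $t>0$ (its derivative is $-tf''(t)$ and $f$ is convex with $f(1)=0$, ruling out constancy on intervals straddling $t=1$). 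This contradicts $h(\cdot,v)$ being constant on $[\beta_1,\beta_2]$. For non-smooth choices such as $f(t)=|t-1|/2$, I would instead replace this derivative argument with direct computation (for TV, $h(\beta,z)=|z-\beta|$, giving $g_{f,\rho}(\beta)=\max(\beta-\rho,0)$ which is evidently strictly increasing beyond $\beta_0=\rho$), so the case split is unavoidable but manageable.
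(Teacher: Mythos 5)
The paper does not prove this lemma itself---it cites it as adapted from Lemma~A.1 of \citet{cauchois2023robust}---so there is no internal proof to compare against. Evaluating your argument on its own merits: the overall architecture (reduce everything to the two-point function $h(\beta,z)$, use joint convexity via perspective functions, get convexity of $g$ by infimal projection, get monotonicity in $\beta$ from $h(\cdot,z)$ being convex with minimum at $\beta=z$) is the right one and most of it goes through. Two spots need repair.

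First, the boundary-continuity argument at $\beta=1$ rests on the claim that $g_{f,\rho}(\beta)\to 1$ as $\beta\to 1^-$, and that claim is simply false in general: for the TV case you yourself compute $g_{f,\rho}(\beta)=\max(\beta-\rho,0)$, which tends to $1-\rho<1$. (Continuity at $\beta=1$ may well still hold---for TV it does, since $g_{f,\rho}(1)=1-\rho$ too---but the sandwich you propose does not establish it, and the symmetry inequality $g_{f,\rho}(\beta)\ge\beta-(1-g_{f,\rho}(1-\beta))$ degenerates to the trivial bound $g\ge 0$ as $\beta\to1^-$.) You would need to argue directly that the bounded convex function $\beta\mapsto g_{f,\rho}(\beta)$ does not jump up at the right endpoint, e.g.\ via the upper bound $g_{f,\rho}(\beta)\le\beta$ combined with a matching lower bound obtained from the explicit form of the constraint near $\beta=1$.

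Second, for strict monotonicity on $(\beta_0(\rho),1]$, the route through $\phi(t)=f(t)-tf'(t)$ being strictly decreasing is both more than you need and not actually available under the stated hypotheses: a closed convex $f$ with $f(1)=0$ can be affine (hence have $f''\equiv0$, $\phi$ constant) on a whole neighborhood of $1$, and this is not confined to the TV example you single out. But you have already done all the work needed for a cleaner argument. Under the contradiction hypothesis you correctly deduce $h(\beta_1,v)=\rho$ and $h(v,v)=0$ with $0<v<\beta_1$. Convexity of $\beta\mapsto h(\beta,v)$ then forces any subgradient $s$ at $\beta_1$ to satisfy $0=h(v,v)\ge h(\beta_1,v)+s(v-\beta_1)=\rho+s(v-\beta_1)$, i.e.\ $s\ge\rho/(\beta_1-v)>0$, and hence $h(\beta,v)\ge\rho+s(\beta-\beta_1)>\rho$ for every $\beta>\beta_1$. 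This makes $v$ infeasible at $\beta$ for all $\beta>\beta_1$, so $g_{f,\rho}(\beta)>v=g_{f,\rho}(\beta_1)$, which is strict monotonicity with no smoothness assumption and no case split. I would replace the $\phi$-derivative computation with this one-line subgradient argument.
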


\section{Technical Proofs}
\subsection{Proof of Theorem \ref{thm:cov_known_shift}}
\label{appx:proof_cov_known_shift}
For notional simplicity, let $A \,:=\, 
\big\{Y_{n+1}\in \hat{C}_{f,\rho}(X_{n+1})\big\}$.
We aim to show that
\$
\PP_{(X_{n+1},Y_{n+1})\sim Q_{X,Y}}(A) \ge 1-\alpha.
\$
Denote by $\bern(p)$ the 
Bernoulli distribution with success probability $p$.
By data processing inequality (Lemma~\ref{lem:dpi}), 
we have that
\begin{align}
D_f\big(Q_{Y\given X} \,\|\, 
P_{Y \given X}\big) \ge
D_f\big(\bern\big(\mathbb{P}_{Y_{n+1}\sim Q_{Y \given X}}(A \given X_{n+1},\cD)\big)
\,\big\|\, \bern\big(\mathbb{P}_{Y_{n+1}\sim P_{Y\given X}}(A \given X_{n+1},\cD)\big)\big).
\end{align}
Recall that $Q \in \cP(\rho;P)$. We then have 
\begin{align}
\rho & \geq D_f\big(Q_{Y \given X}\,\|\,P_{Y \given X}\big)\\
      & \geq D_f\big(\bern\big(\mathbb{P}_{Y_{n+1} \sim 
      Q_{Y \given X}}(A \given X_{n+1}, \cD)\big)
      \, \big\|\, \bern\big(\mathbb{P}_{Y_{n+1} 
      \sim P_{Y \given X}}(A \given X_{n+1}, \cD)\big)\big)\\
      & = \mathbb{P}_{Y_{n+1}\sim P_{Y \given X}}(A \given X_{n+1},\cD)\cdot 
      f\bigg(\frac{\mathbb{P}_{Y_{n+1}\sim Q_{Y\given X}}(A \given X_{n+1}, \cD)}
      {\mathbb{P}_{Y_{n+1}\sim P_{Y \given X}}(A\given X_{n+1}, \cD)}\bigg)\\
      &\qquad \qquad \qquad \qquad \qquad 
      +\big(1-\mathbb{P}_{Y\sim P_{Y \given X}}(A \given X_{n+1}, \cD)\big)
      \cdot 
      f\bigg(\frac{1-\mathbb{P}_{Y\sim Q_{Y\given X}}(A \given X_{n+1}, \cD)}
      {1-\mathbb{P}_{Y\sim P_{Y \given X}}(A \given X_{n+1},\cD)}\bigg),
\end{align}
where the last step follows from the definition of the 
$f$-divergence.
Combining the above and the definition of $g_{f,\rho}$, 
one can obtain that almost surely
\begin{align}
\label{eq:quant_ub}
  g_{f,\rho}\big(\mathbb{P}_{Y\sim P_{Y \given X}}(A)\big)
  \leq \mathbb{P}_{Y\sim Q_{Y \given X}}(A).
\end{align}
Next, we take the expectation over the randomness of 
the training set $\cD$ and $X_{n+1}$
\begin{align}
\label{eq:step1}
\mathbb{P}_{\cD,\, 
(X_{n+1},Y_{n+1})\sim Q_{X,Y}}(A)
&=\mathbb{E}_{\cD, \, 
X_{n+1}\sim Q_X}\big[
\mathbb{P}_{Y_{n+1}\sim Q_{Y \given X}}
\big(A \given X_{n+1},\cD\big)\big] \nonumber\\
& \geq\mathbb{E}_{\cD,
X_{n+1}\sim Q_X}
\Big[ g_{f,\rho}\big(\mathbb{P}_{Y\sim P_{Y \given X}}(A \given X_{n+1},
\cD) \big)\Big],
\end{align}
where the last inequality is because of~\eqref{eq:quant_ub}.
Since $g_{f,\rho}$ is convex (Lemma~\ref{lem:property}), 
Jensen's inequality implies
\begin{align}
\label{eq:step2}
\mathbb{E}_{\cD,X_{n+1}\sim Q_X}\Big[
g_{f,\rho}\big(\mathbb{P}_{Y\sim P_{Y \given X}}(A \given X_{n+1},\cD)\big)\Big]
& \ge g_{f,\rho}\Big(
\mathbb{P}_{\cD,
(X_{n+1},Y_{n+1}) \sim Q_{X} \times P_{Y\given X}}(A)\Big).
\end{align}
By~\citet[Corollary 1]{tibshirani2019conformal} and the 
construction of $\hat{C}_{f,\rho}(X_{n+1})$, there is  
\begin{align}
  \mathbb{P}_{\cD,(X_{n+1},Y_{n+1})\sim Q_{X}\times P_{Y \given X}}(A) 
  \geq g_{f,\rho}^{-1}\big(1-\alpha\big).
\end{align}
Again leveraging the monotonicity of $g_{f,\rho}(\beta)$ in $\beta$, we 
arrive at
\begin{align}
\mathbb{P}_{\cD, 
(X_{n+1},Y_{n+1})\sim Q_{X,Y}}(A)
&\geq g_{f,\rho}\big(g_{f,\rho}^{-1}(1-\alpha)\big).
\end{align}
When $g_{f,\rho}(1) \ge 1-\alpha$, we prove that 
$g_{f,\rho}\big(g^{-1}_{f,\rho} (1-\alpha)\big) \ge 1-\alpha$
by contradiction. Suppose otherwise that 
$g_{f,\rho}\big(g^{-1}_{f,\rho} (1-\alpha)\big) < 1-\alpha$.
Then $g^{-1}_{f,\rho}(1-\alpha)<1$. By the continuity of 
$g_{f,\rho}(\beta)$ in $\beta$, there exists a small 
$\varepsilon >0$, such that $g_{f,\rho}(\beta) < 1-\alpha$ for all 
$\beta \in [g^{-1}_{f,\rho}(1-\alpha), g^{-1}_{f,\rho}(1-\alpha)+\epsilon]$,
which contradicts the definition of $g^{-1}_{f,\rho}(1-\alpha)$.
The proof is therefore completed.

\subsection{Proof of Theorem \ref{thm:cov_est}}
\label{appx:proof_cov_est}
Throughout the proof, we condition on $\cD_\tr^{(0)}\cup \cD_\test^{(1-k)}$,
and for notational simplicity we do not explicitly write the conditional 
probability and expectation when the context is clear. 

We start by defining a new distribution $\widetilde{Q}_{X,Y}=
\widetilde{Q}_X \times P_{Y \given X}$ 
such that 
\$
\frac{d\tilde{Q}_X}{dP_X}(x) = 
\frac{\hat{w}^{(k)}(x)}{\EE_{X \sim P_X}[\hat{w}^{(k)}(X)]}
.
\$ 
Let $A = \big\{Y_{n+j} \in \hat{C}_{f,\rho,j}(X_{n+j})\big\}$.
If $(X_{n+j},Y_{n+j})$ were indeed sampled from $\widetilde{Q}_{X,Y}$,
then by Lemma 1 of~\citet{tibshirani2019conformal}, we have 
\begin{align}
  \label{eq:tildeq}
  \mathbb{P}_{\cD_{\tr}^{(1)},
  (X_{n+j},Y_{n+j})\sim \widetilde{Q}_{X,Y}}(A)\geq g_{f,\rho}^{-1}(1-\alpha).
\end{align}
Meanwhile, by the definition of the TV distance, 
\begin{align}
&\Big|\mathbb{P}_{(X_{n+j},Y_{n+j})\sim \widetilde{Q}_X\times P_{Y \given X}}
\big(A \given \cD_\tr^{(1)}\big) -
\mathbb{P}_{(X_{n+j},Y_{n+j})\sim Q_X \times P_{Y\given X}}
\big(A \given \cD_\tr^{(1)}\big)\Big|\\
\leq ~&D_{\text{TV}}\big(\widetilde{Q}_X\times Q_{Y\given X}\,\|\, 
Q_X\times Q_{Y \given X}\big)\\
=~& D_{\text{TV}}(\widetilde{Q}_X \,\|\, Q_X)
= \frac{1}{2}\mathbb{E}_{X\sim P_X}
\bigg[\Big|\frac{\hat{w}^{(k)}(X)}{\EE_{X\sim P_X}[\hat{w}^{(k)}(X)]}-w(X)\Big|\bigg].
\end{align}
The above inequality implies that
\begin{align}
\mathbb{P}_{(X_{n+j},Y_{n+j})\sim Q_X\times P_{Y\given X}}
\big(A \biggiven \cD_\tr^{(1)} \big) \geq
\mathbb{P}_{(X_{n+j},Y_{n+j})
\sim \widetilde{Q}_X \times P_{Y \given X}}\big(A \biggiven \cD_\tr^{(1)}\big)
-\frac{1}{2}\mathbb{E}_{X\sim P_X}
\bigg[\Big|\frac{\hat{w}^{(k)}(X)}{\EE_{X\sim P_X}[\hat{w}^{(k)}(X)]}-w(X)\Big|\bigg].
\end{align}
Taking expectation over $\cD_\tr^{(1)}$, we have
\begin{align}
& \PP_{\cD_\tr^{(1)},
(X_{n+j},Y_{n+j})\sim {Q}_X \times P_{Y\given X}} (A)\\
\geq~& \PP_{\cD_\tr^{(1)},
(X_{n+j},Y_{n+j})\sim \widetilde{Q}_{X,Y}}(A)
-\frac{1}{2}\mathbb{E}_{X\sim P_X}
\bigg[\Big|\frac{\hat{w}^{(k)}(X)}{\EE_{X\sim P_X}[\hat{w}^{(k)}(X)]}-w(X)\Big|\bigg]\\
\geq~& g_{f,\rho}^{-1}(1-\alpha)
-\frac{1}{2}\mathbb{E}_{X\sim P_X}
\bigg[\Big|\frac{\hat{w}^{(k)}(X)}{\EE_{X\sim P_X}[\hat{w}^{(k)}(X)]}-w(X)\Big|\bigg],
\end{align}
where the last step follows from~\eqref{eq:tildeq}.
Following the same argument as in the proof of Theorem~\ref{thm:cov_known_shift} 
(Eqn.~\eqref{eq:step1} and~\eqref{eq:step2}), we have
\begin{align}
\label{eq:first_result}
& \mathbb{P}_{\cD_\tr^{(1)},
(X_{n+1},Y_{n+1})\sim Q_{X,Y}}(A) \nonumber \\
\geq ~& g_{f,\rho}\Big(\mathbb{P}_{\cD_\tr^{(1)},
(X_{n+j},Y_{n+j})\sim Q_X\times P_{Y \given X}}(A)\Big)\nonumber\\
\geq ~& g_{f,\rho}\Bigg(g_{f,\rho}^{-1}(1-\alpha)-
\frac{1}{2}\mathbb{E}_{X\sim P_X}\bigg[\Big|\frac{\hat{w}^{(k)}(X)}
{\EE_{X\sim P_X}[\hat{w}^{(k)}(X)]}-w(X)\Big|\bigg]\Bigg),
\end{align}
where the last inequality is because $g_{f,\rho}(\beta)$ is non-decreasing in 
$\beta$. Since $g_{f,\rho}(\beta)$ is convex in $(0,1)$, the left derivative 
exists and by the separating hyperplane theorem, we further have 
\$
\eqref{eq:first_result} 
\ge g_{f,\rho}\big(g^{-1}_{f,\rho}(1-\alpha)\big) - 
\frac{1}{2}g'_{f,\rho}\big(g^{-1}_{f,\rho}(1-\alpha)\big)
\cdot \mathbb{E}_{X\sim P_X}\bigg[\Big|\frac{\hat{w}^{(k)}(X)}
{\EE_{X\sim P_X}[\hat{w}^{(k)}(X)]}-w(X)\Big|\bigg].
\$
As shown in the proof of Theorem~\ref{thm:cov_known_shift}, 
when $g_{f,\rho}(1) \ge 1-\alpha$, 
$g_{f,\rho}\big(g_{f,\rho}^{-1}(1-\alpha)\big) \ge 1-\alpha$, and 
we conclude the proof.

\subsection{Proof for the doubly robust prediction intervals}
We start by proving that the $D_{f,\rho,n+j}^{\text{DR}}(X_{n+j})$ 
is training-conditionally valid in Theorem~\ref{thm:hp_dr_pi}, and 
then show that Theorem~\ref{thm:dr_pi}  
is a direct consequence of Theorem~\ref{thm:hp_dr_pi}.
\label{appx:proof_dr}
\begin{theorem}
\label{thm:hp_dr_pi}
For any $k \in \{0,1\}$, assume that 
\begin{enumerate}
\item [(1)]  $\hat{w}^{(k)}(x) \le w_{\max} \cdot \EE_{P_X}[\hat{w}^{(k)}(X)]$;
\item [(2)] $\hat{m}^{(k)}(x;t) \in [0,1]$ is non-decreasing and right-continuous in $t$.
\end{enumerate}
Denote the product estimation error by 
\$ 
\textnormal{EstErr}^{(k)}(t) =
\big\|m^{(k)}(X;t) - \hat{m}^{(k)}(X;t)\big\|_{L_2(P)}
\bigg\|\frac{\hat{w}^{(k)}(X)}{\EE_{X \sim P_X}[\hat{w}^{(k)}(X)]}
- w(X)\bigg\|_{L_2(P)}, 
\$
where $m^{(k)}(X;t) := \PP_{Y \given X \sim P_{Y\given X}}(s(X,Y;\hat \mu^{(k)})\le t \given X=x)$, 
 $\|\cdot \|_{L_2(P)}$ denotes the $L_2$-norm under 
$P$, and the expectation is taken conditional on $\cD_\tr^{(1-k)}$
and $\cD_\test^{(1-k)}$.
Then for any $\delta>0$ and any unit $n+j \in \cI_\test^{(k)}$, 
with probability at least $1-\delta$, 
\$  
& \PP_{(X_{n+j},Y_{n+j})\sim Q_{X,Y}}
\big(Y_{n+j} \in \hat{C}_{f,\rho,n+j}^{\textnormal{DR}}(X_{n+j})\biggiven 
\cD_\tr^{(k)}, \cD_{\test,j}^{(k)}, \cD_\tr^{(1-k)}, \cD_{\test}^{(1-k)}\big)\\
\ge~&  g_{f,\rho}\big(g^{-1}_{f,\rho}(1-\alpha)\big) - 
g'_{f,\rho}\big(g^{-1}(1-\alpha)\big) 
\times \Bigg\{\sup_{t \in \cT(\alpha)}
\textnormal{EstErr}^{(k)}(t)
+ \sqrt{\log\Big(\frac{1}{\delta}\Big)
\cdot \bigg(\frac{9w_{\max}^2}{|\cI_\tr^{(k)}|}
+ \frac{1}{|\cI_{\test,j}^{(k)}|}\bigg)}
\Bigg\}, 
\$
where $g'_{f,\rho}$ is the left derivative of $g_{f,\rho}$. 
If $g_{f,\rho}(1) \ge 1-\alpha$,  then with probability at least $1-\delta$, 
\$  
& \PP_{(X_{n+j},Y_{n+j})\sim Q_{X,Y}}\big(Y_{n+j} \in \hat{C}_{f,\rho,n+j}^{\textnormal{DR}}(X_{n+j})\biggiven 
\cD_\tr^{(k)}, \cD_{\test,j}^{(k)}, \cD_{\tr}^{(1-k)}, \cD_{\test}^{(1-k)}\big)\\
\ge~&  1-\alpha - 
g'_{f,\rho}\big(g^{-1}(1-\alpha)\big) 
\times \Bigg\{\sup_{t \in \cT(\alpha)} \textnormal{EstErr}^{(k)}(t)
+ \sqrt{\log\bigg(\frac{1}{\delta}\Big)\cdot
\Big(\frac{9w_{\max}^2}{|\cI_\tr^{(k)}|}
+ \frac{1}{|\cI_{\test,j}^{(k)}|}\bigg)}
\Bigg\}. 
\$
\end{theorem}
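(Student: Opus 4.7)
The plan is to establish training-conditional coverage by working in three steps: (i) show that the doubly robust estimator $\hat{p}^{(1)}(t)$ concentrates around the target CDF $p(t) := \mathbb{P}_{Q_X \times P_{Y\given X}}(s(X,Y) \le t)$ uniformly in $t$, with bias controlled by $\esterr^{(k)}(t)$ and stochastic deviation of order $\sqrt{\log(1/\delta)/n}$; (ii) translate this uniform closeness into a bound on the monotonized quantile $\hat{q}$ that places it inside the neighborhood $\cT(\alpha)$ and yields a lower bound on $p(\hat{q})$; and (iii) lift marginal coverage under $Q_X \times P_{Y\given X}$ to coverage under the full target $Q_{X,Y}$ using the $f$-divergence ball constraint, exactly as in the proof of Theorem~\ref{thm:cov_known_shift}. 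Throughout I condition on $\cD_\tr^{(1-k)} \cup \cD_\test^{(1-k)}$, so that $\hat{w}^{(k)}$, $\hat{m}^{(k)}$, and $\hat{\mu}^{(k)}$ are treated as fixed functions, and the remaining randomness comes from $\cD_\tr^{(k)}$, $\cD_{\test,j}^{(k)}$, and the test point.

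For step (i), I would decompose $\hat{p}^{(1)}(t) - p(t)$ into a bias term and a mean-zero stochastic term. Writing $\tilde{w}(x) = \hat{w}^{(k)}(x)/\mathbb{E}_P[\hat{w}^{(k)}(X)]$ and using the doubly robust identity
\[
\mathbb{E}_P\bigl[\tilde{w}(X)\bigl(\ind\{S\le t\} - \hat{m}^{(k)}(X;t)\bigr)\bigr] + \mathbb{E}_{Q_X}\bigl[\hat{m}^{(k)}(X;t)\bigr] - p(t) = \mathbb{E}_P\bigl[(\tilde{w}(X)-w(X))\bigl(\ind\{S\le t\} - \hat{m}^{(k)}(X;t)\bigr)\bigr],
\]
Cauchy--Schwarz bounds the bias by $\esterr^{(k)}(t)$. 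For the stochastic term, the self-normalized training average and the test average are sums of bounded summands (by $w_{\max}\mathbb{E}_P[\hat{w}^{(k)}]$ and $1$ respectively), so Hoeffding's inequality gives the $\sqrt{\log(1/\delta)\,(w_{\max}^2/|\cI_\tr^{(k)}| + 1/|\cI_{\test,j}^{(k)}|)}$ term; one further needs to control $\sum \hat{w}^{(k)}(X_i)/|\cI_\tr^{(k)}|$ around its mean by Hoeffding, which contributes a term of the same order (the constant $9$ in the theorem absorbs these factors). For uniformity in $t$, I would exploit the monotonicity of $\ind\{S\le t\}$ and of $\hat{m}^{(k)}(x;t)$ to reduce to a countable grid, so that a DKW-flavoured argument upgrades pointwise to uniform concentration without a $\log$-covering penalty.

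Step (ii) leverages the monotonization. Let $\tilde{p}(t) = \inf_{t'\ge t} \hat{p}^{(1)}(t')$, so that $\hat{q} = \inf\{t: \tilde{p}(t) \ge g^{-1}_{f,\rho}(1-\alpha)\}$. Since $\inf$ is a contraction and $p$ is itself non-decreasing, the uniform bound from step (i) transfers to $|\tilde{p}(t) - p(t)| \le \esterr^{(k)}(t) + \text{concentration}$. On the high-probability event, at $t = q^*(0)$ one has $p(t) \ge g^{-1}_{f,\rho}(1-\alpha)$, hence $\tilde{p}(q^*(0))$ exceeds $g^{-1}_{f,\rho}(1-\alpha)$ up to the error, which pins down $\hat{q} \in [\underline{q}, q^*(0)] = \cT(\alpha)$ (this is precisely the purpose of the definition of $\underline{q}$). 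By monotonicity of $p$ and the identity $\tilde{p}(\hat{q}) \ge g^{-1}_{f,\rho}(1-\alpha)$, I obtain
\[
p(\hat{q}) \ge g^{-1}_{f,\rho}(1-\alpha) - \sup_{t \in \cT(\alpha)} \esterr^{(k)}(t) - \sqrt{\log(1/\delta)\,\Bigl(\tfrac{9 w_{\max}^2}{|\cI_\tr^{(k)}|} + \tfrac{1}{|\cI_{\test,j}^{(k)}|}\Bigr)}.
\]

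Step (iii) follows the chain of inequalities from the proof of Theorem~\ref{thm:cov_known_shift}: the conditional data processing inequality yields $\mathbb{P}_{Q_{Y\given X}}(Y \in \hat{C} \mid X) \ge g_{f,\rho}(\mathbb{P}_{P_{Y\given X}}(Y \in \hat{C} \mid X))$; taking $\mathbb{E}_{Q_X}$ and applying Jensen using convexity of $g_{f,\rho}$ produces $g_{f,\rho}(p(\hat{q}))$ on the right-hand side; finally, the subgradient inequality at $g^{-1}_{f,\rho}(1-\alpha)$ (using the left derivative $g'_{f,\rho}$) linearizes the gap, yielding the stated lower bound. The second conclusion for $g_{f,\rho}(1) \ge 1 - \alpha$ follows from $g_{f,\rho}(g^{-1}_{f,\rho}(1-\alpha)) \ge 1-\alpha$, proved by contradiction as in Appendix~\ref{appx:proof_cov_known_shift}. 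I expect step (i) to be the main obstacle: the self-normalized training estimator complicates a clean Hoeffding bound, and getting the uniform deviation to collapse to the \emph{local} $L_2$ product error $\sup_{t\in\cT(\alpha)} \esterr^{(k)}(t)$ rather than a global one requires combining the monotonization argument with a peeling step that first coarsely localizes $\hat{q}$ before applying the tighter bound. Steps (ii) and (iii) are routine once this uniform-but-localized control is in hand.
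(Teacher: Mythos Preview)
Your Step (iii) matches the paper exactly, and your bias calculation via Cauchy--Schwarz in Step (i) is the same as the paper's. The route through Steps (i)--(ii), however, differs substantially from the paper and contains a gap at precisely the point you flag as the obstacle.

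The paper does \emph{not} establish uniform concentration of $\hat{p}(t)$ around $p(t)$. Instead, it exploits the monotonization directly to reduce to a \emph{single-point} tail bound: by the definition of $\hat{q}$, the event $\{\hat{q} \le q^*(\Delta)-\varepsilon\}$ forces $\hat{p}(q^*(\Delta)-\varepsilon) \ge g^{-1}_{f,\rho}(1-\alpha)$ at the fixed point $\bar t := q^*(\Delta)-\varepsilon$. The paper then bounds $\PP(\hat{p}(\bar t)\ge g^{-1}_{f,\rho}(1-\alpha))$ by a Chernoff/MGF argument (exponential Markov, then Cauchy--Schwarz on the MGF to decouple the self-normalization, then sub-Gaussianity of the bounded pieces, then optimize $\eta$). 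The localization to $\cT(\alpha)$ is built in self-consistently: $\Delta$ is \emph{defined} using $\sup_{t\in\cT(\alpha)}\esterr(t)$, and one checks that $\bar t$ lands in the neighborhood, so only $\esterr(\bar t)\le\sup_{t\in\cT(\alpha)}\esterr(t)$ enters. No DKW, no uniform bound, no peeling.

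Your proposed route---uniform concentration, then transfer to $\tilde p$ via $|\tilde p(t)-p(t)|\le\sup_{t'\ge t}|\hat p(t')-p(t')|$, then localize---does not close as written. The transfer step gives a supremum over $\{t'\ge\hat q\}$, not over $\cT(\alpha)$, so you inherit $\esterr(t')$ for arbitrarily large $t'$; the ``peeling step that first coarsely localizes $\hat q$'' you mention would itself need a global $\esterr$ bound to get started, which defeats the localization. The paper's fixed-point trick (define $\Delta$ locally, then verify the single relevant point $\bar t$ lies in the local window) is what makes the local error appear without circularity. I would recommend adopting the pointwise Chernoff argument at $\bar t=q^*(\Delta)-\varepsilon$ in place of your uniform-concentration plan; the self-normalized sum is handled there by Cauchy--Schwarz on MGFs rather than a separate Hoeffding bound on the denominator, which is also how the constant $9$ arises.
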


\begin{proof}
Without loss of generality, assume $k = 1$.  
Throughout, we condition on 
$\cD_\tr^{(0)} \cup \cD_\test^{(0)}$ without explicitly writing 
the conditioning event when the context is clear.
For notational simplicity, we define the normalized weight as 
\$ 
\tw^{(k)}(x) = \frac{\hat{w}^{(k)}(x)}{\EE_{X\sim P_X}[\hat{w}^{(k)}(X)]}.
\$
In the proof we leave out the 
dependence on $k$, writing $m(X;t)$, $\hat{m}(X;t)$ and $\tw(X)$  
in place of $m(X;t)$, $\hat{m}^{(1)}(X;t)$ and $\tw^{(1)}(X)$; additionally, we refer to 
$\EE_{(X,Y) \sim P_{X,Y}}$ as $\EE_{P_{X,Y}}$, 
with the same rule applied to the expectation/probability 
under other distributions, and let $n_\tr = |\cI_\tr^{(1)}|$
and $n_\test = |\cI_{\test,j}^{(1)}|$. 

For any $t \in \RR$, we define the oracle CDF 
$F(t) = \PP_{Q_X\times P_{Y \given X}}\big(s(X,Y; \hat \mu^{(1)}) \le t\big)$, 
and for any $\xi \in [0,g^{-1}_{f,\rho}(1-\alpha)]$, the 
perturbed oracle quantile $q^*(\xi)$ can be equivalently written as
\$ 
q^*(\xi) = \inf\big\{t \in \RR: F(t) \ge g^{-1}_{f,\rho}(1-\alpha) - \xi\big\}.
\$
Consider the error of margin
\$
& \Delta = \sup_{\kappa \cdot q^*(\bar{\Delta})\le  t \le q^*(0)} 
\esterr(t)
 + \sqrt{\log\Big(\frac{1}{\delta}\Big)\cdot 
\Big(\frac{9w_{\max}^2}{n_\tr} + \frac{1}{n_\test}\Big) },
\$ 
where $\kappa \in (0,1)$ is a constant that can be arbitrarily close to $1$
and 
\$
&\bar{\Delta} = \Bigg\{\sup_{0 \le  t \le q^*(0)} 
\esterr(t)
 + \sqrt{\log\Big(\frac{1}{\delta}\Big)
\cdot 
\Big(\frac{9w_{\max}^2}{n_\tr} + \frac{1}{n_\test}\Big) } \Bigg\}
\wedge g^{-1}_{f,\rho}(1-\alpha).
\$
Here, $a \wedge b = \min(a,b)$ and $\Delta$ is fully deterministic conditional 
on $\cD_\tr^{(0)} \cup \cD_\test^{(0)}$.

The proof consists of two steps: (1) we show that,
with high probability, $\hat{q} \ge q^*(\Delta)$.
and therefore $\hat{q}$ is no less than the 
$(g^{-1}_{f,\rho}(1-\alpha) - \Delta)$-th quantile
under $Q_X\times P_{Y \given X}$, and (2) $\hat{q}$ is 
approximately
an upper bound of the $(1-\alpha)$-th quantile under $Q_{X,Y}$.

\paragraph{Step (1).}
On the event $\{q^*(\Delta) \le \hat{q}\}$, 
\$ 
\PP_{Q_X\times P_{Y\given X}}\big(s(X_{n+j}, Y_{n+j}) \le \hat{q} \given 
\cD_\tr^{(1)}, \cD_{\test,j}^{(1)}\big) = F(\hat{q}) 
\stackrel{\text{(i)}}{\ge} F(q^*(\Delta)) 
\stackrel{\text{(ii)}}{\ge} g^{-1}_{f,\rho}(1-\alpha) - \Delta,
\$
where step (i) follows from the monotonicity of 
$F(t)$, and step (ii) is by the definition of $q^*(\Delta)$ 
and that $F(t)$ is right-continuous. It then suffices 
to control the probability of $\{q^*(\Delta) > \hat{q}\}$.
Fixing $\delta \in [0,1]$, we aim at showing that 
\$ 
\PP\big(\hat{q} < q^*(\Delta)\big) \le \delta. 
\$
The above is trivial when $\Delta \ge g^{-1}_{f,\rho}(1-\alpha)$. 
We proceed assuming that $\Delta < g^{-1}_{f,\rho}(1-\alpha)$.

For any $\varepsilon > 0$, we have that 
$F(q^*(\Delta) - \eps) < g^{-1}_{f,\rho}(1-\alpha) - \Delta$ 
by the definition of $q^*(\Delta)$.
If $\hat{q}\le q^*(\Delta) - \eps$, then 
the choice of $\hat{q}$ implies that 
\$ 
\hat{p}(q^*(\Delta) - \eps) \ge g^{-1}_{f,\rho}(1-\alpha).
\$
In other words, 
\begin{align} 
\label{eq:bad_event}
\PP\big(\hat{q} \le  q^*(\Delta) - \eps\big) 
\le ~& 
\PP\Big(\hat{p}(q^*(\Delta) - \eps) \ge g^{-1}_{f,\rho}(1-\alpha)\Big)
.
\end{align}
For better readability, we use $\Bar{t}$ to represent 
$q^*(\Delta) - \eps$, and let 
$Z_i = \ind\{S_i \le \bart\} - \hat{m}(X_i;\bart)$
in the following. The right-hand side of~\eqref{eq:bad_event} can be 
further upper bounded as
\@\label{eq:dr_step1} 
& \PP\big(\hat{p}(\bart) \ge g^{-1}_{f,\rho}(1-\alpha)\big)\nonumber\\
=~& \PP\bigg(\sum_{i \in \cI_\tr^{(1)}} \tw(X_i)Z_i + 
\Big(\sum_{i \in \cI_\tr^{(1)}} \tw(X_i)\Big)
\Big(\frac{1}{n_\test}
\sum_{i \in \cI_{\test,j}^{(1)}} \hat{m}(X_i;\bart)
- g_{f,\rho}^{-1}(1-\alpha)\Big) \ge 0 \bigg) \nonumber\\
\le~& \EE\Bigg[\exp\bigg\{\eta\bigg(\sum_{i \in \cI_\tr^{(1)}} \tw(X_i)Z_i + 
\Big(\sum_{i \in \cI_\tr^{(1)}} \tw(X_i)\Big)
\Big(\frac{1}{n_\test}\sum_{i \in \cI_{\test,j}^{(1)}} \hat{m}(X_i;\bart)
- g_{f,\rho}^{-1}(1-\alpha)\Big)  
\bigg)\bigg\}\Bigg],
\@
where $\eta > 0$ is some constant to be determined and 
the last step follows from Markov's inequality.

Conditional on $\{X_i\}_{i\in \cI_\tr^{(1)}}$, 
$Z_i - \EE[Z_i \given X_i]$ are $\frac{1}{4}$-subgaussian random variables.
Therefore,
\$ 
\EE\bigg[\exp\Big\{\eta\Big(\sum_{i \in \cI_\tr^{(1)}}
\tw(X_i) \cdot \big(Z_i-\EE[Z_i \given X_i]\big)\Big) \Big\} \Biggiven 
\{X_i\}_{i \in \cI_\tr^{(1)}}\bigg] 
\le \exp\Big(\frac{\eta^2 w_{\max}^2 n_\tr}{8}\Big). 
\$
The above implies that 
\$ 
\eqref{eq:dr_step1} \le ~& 
\exp\Big(\frac{\eta^2 w_{\max}^2 n_\tr}{8}\Big)
\cdot \EE\Bigg[\exp\Bigg\{
\eta\bigg(\sum_{i \in \cI_\tr^{(1)}} 
\tw(X_i)\EE[Z_i \given X_i] \\  
&\qquad\qquad\qquad\qquad\qquad \qquad  
+\Big(\sum_{i \in \cI_\tr^{(1)}} \tw(X_i)\Big)
\Big(\frac{1}{n_\test}
\sum_{i \in \cI_{\test,j}} \hat{m}(X_i;\bart)
- g_{f,\rho}^{-1}(1-\alpha)\Big)  
\bigg)\Bigg\}\Bigg]\\
\le~& \exp\Big(\frac{\eta^2 w_{\max}^2 n_\tr}{8}\Big)\cdot 
\EE\Bigg[\exp\Bigg\{
2\eta\bigg(\sum_{i \in \cI_\tr^{(1)}} 
\Big(\tw(X_i)\EE[Z_i \given X_i]  
+
\frac{1}{n_\test}\sum_{\ell \in \cI_{\test,j}^{(1)}} \hat{m}(X_\ell;\bart)
- g_{f,\rho}^{-1}(1-\alpha)\Big)
\bigg)\Bigg\}\Bigg]^{1/2}\\
& \qquad \qquad \times \EE\Bigg[\exp\bigg(2\eta
\sum_{i \in \cI_\tr^{(1)}} \big(\tw(X_i)-1\big)
\cdot \Big(\frac{1}{n_\test}
\sum_{\ell \in \cI_{\test,j}^{(1)}} \hat{m}(X_\ell;\bart)
- g_{f,\rho}^{-1}(1-\alpha)\Big)  \bigg)\Bigg]^{1/2}, 
\$
where the last inequality follows from the Cauchy-Schwarz inequality.
Since $\cD_\test^{(1)}$ is independent of $\cD_\tr^{(1)}$, conditional 
on $\cD_{\test}^{(1)}$, there is 
\$ 
\EE\Bigg[\exp\bigg(2\eta
\sum_{i \in \cI_\tr^{(1)}} \big(\tw(X_i)-1\big)
\Big(\frac{1}{n_\test}
\sum_{\ell \in \cI_{\test,j}^{(1)}} \hat{m}(X_\ell;\bart)
- g_{f,\rho}^{-1}(1-\alpha)\Big)  \bigg)\bigggiven \cD_{\test}^{(1)}\Bigg]^{1/2}
\le \exp\big(\eta^2w_{\max}^2 n_\tr\big),
\$
where we use the sub-gaussianity of $\tw(X_i)$.
Recalling that $F(\bart) < g^{-1}_{f,\rho}(1-\alpha) - \Delta$,
we have that 
\@ 
\label{eq:dr_step2}
&\EE\Bigg[\exp\Bigg\{2\eta\bigg(\sum_{i \in \cI_\tr^{(1)}} 
\Big(\tw(X_i)\EE[Z_i \given X_i]  
+
\frac{1}{n_\test}
\sum_{\ell \in \cI_{\test,j}^{(1)}} \hat{m}(X_\ell;\bart)
- g_{f,\rho}^{-1}(1-\alpha)\Big)
\bigg)\Bigg\}\Bigg]^{1/2}\nonumber\\
\le~&
\EE\Bigg[\exp\Bigg\{
2\eta\bigg(\sum_{i \in \cI_\tr^{(1)}} 
\Big(\tw(X_i)\EE[Z_i \given X_i]  
+\frac{1}{n_\test}
\sum_{\ell \in \cI_{\test,j}^{(1)}} \hat{m}(X_\ell;\bart)
- F(\bart) - \Delta \Big)
\bigg)\Bigg\}\Bigg]^{1/2}.
\@
By definition, 
$F(\bart) = \PP_{Q_X \times P_{Y \given X}}(s(X,Y) \le \bart)
= \EE_{P_{X}}[w(X)(m(X;\bart) - \hat m(X;\bart))] + \EE_{Q_X}[\hat{m}(X;\bart)]$. Therefore, 
\@
\label{eq:step3}
& \big|F(\bart) - \EE_{P_{X}}
[\tw(X)\EE_P[Z\given X]] - \EE_{Q_X}[\hat{m}(X;\bart)]\big|\nonumber \\
\le~& \bigg|\EE_{P_{X,Y}}
\Big[\big(w(X) - \tw(X)\big)
\cdot \big(m(X;\bart) - \hat{m}(X;\bart) \big)
\Big]\bigg|\nonumber \\
\le ~&\big\|w(X) - \widetilde{w}(X)\big\|_{L_2(P)} \cdot  
\big\|m(X;\bart) - \hat{m}(X;\bart)\big\|_{L_2(P)}
\nonumber\\
= ~& \esterr(\bart).
\@
The last inequality follows from the Cauchy-Schwarz inequality.

Next, we focus on the following quantity:
\@ \label{eq:step4}
&\EE\Bigg[\exp\bigg\{ 
2\eta\bigg(\sum_{i \in \cI_\tr^{(1)}} 
\Big(\tw(X_i)\EE[Z_i \given X_i] -\EE_{P_{X,Y}}[\tw(X)Z]\Big) 
+\frac{n_\tr}{n_\test}\sum_{i \in \cI_{\test,j}^{(1)}} 
\big(\hat{m}(X_i;\bart) - \EE_{Q_X}[\hat{m}(X;\bart)]\big)
\Big)
\bigg)\bigg\}\Bigg]\nonumber\\
=~&
\EE\Bigg[\exp\bigg\{ 
2\eta\sum_{i \in \cI_\tr^{(1)}} 
\Big(\tw(X_i)\EE[Z_i \given X_i] -\EE_{P_{X,Y}}[\tw(X)Z]\Big)
\bigg\}\Bigg]\nonumber\\
&\qquad \qquad \qquad \qquad 
+ \EE\Bigg[\exp\bigg\{
\frac{2\eta\cdot  n_\tr}{n_\test}\sum_{i \in \cI_{\test,j}^{(1)}} 
\Big(\hat{m}(X_i;\bart) - \EE_{Q_X}[\hat{m}(X;\bart)]\Big)
\bigg\}\Bigg]\nonumber\\
\le ~& \exp\Big(2\eta^2w_{\max}^2 n_\tr+ 
\frac{\eta^2 n_\tr^2}{2n_\test}\Big), 
\@
where the second step uses the independence between $\cD_\test$ and $\cD_\tr$. 
Combining~\eqref{eq:dr_step2},~\eqref{eq:step3}
and~\eqref{eq:step4} leads to 
\$ 
\eqref{eq:dr_step2} \le 
\exp\Big\{-\eta n_\tr\big(\Delta - \esterr(\bart)\big) 
+ \eta^2 \Big( w_{\max}^2 n_\tr + \frac{n_\tr^2}
{4n_\test} \Big) \Big\}.
\$
Putting everything together, we conclude that 
\@\label{eq:step5}
& \PP\Big(\hat{p}(q^*(\Delta) - \eps) \ge g^{-1}_{f,\rho}(1-\alpha)\Big)
\le 
\exp\Big\{-\eta n_\tr \big(\Delta - \esterr(\bart)\big) 
+ \eta^2 \Big( \frac{17  w_{\max}^2 n_\tr}{8} + \frac{n_\tr^2}
{4n_\test} \Big) \Big\}.
\@
We choose $\eta$ to minimize the upper bound above and get
\$
\eta = \frac{\Delta -\esterr(\bart)}
{\frac{17w_{\max}^2}{4} + \frac{n_\tr}{2n_\test}},
\text{ and correspondingly, }
\eqref{eq:step5} 
\le \exp\bigg\{ - 
\frac{\big(\Delta -\esterr(\bart)\Big)^2}
{\frac{9w_{\max}^2}{n_\tr} + \frac{1}{n_\test}}\bigg\}.
\$
Recall that $\bart = q^*(\Delta) - \eps$. Since $\Delta \le \bar{\Delta}$,
$\bart \ge 
q^*(\bar{\Delta}) - \eps$.
For $\eps$ sufficiently small, we further have
$q^*(\Delta) - \eps > q^*(\bar{\Delta}) \cdot \kappa$. 
By the definition of $\Delta$,  we have 
\$ 
\esterr(\bart) + \sqrt{\log\Big(\frac{1}{\delta}\Big)\cdot 
\Big(\frac{9w_{\max}^2 }{n_\tr} + \frac{1}{n_\test}\Big)}
\le \sup_{\kappa \cdot q^*(\bar{\Delta}) \le t \le q^*(0)}
\esterr(t) + \sqrt{\log\Big(\frac{1}{\delta}\Big)\cdot 
\Big(\frac{9w_{\max}^2 }{n_\tr} + \frac{1}{n_\test}\Big)}
= \Delta.
\$
Consequently, we arrive at 
$ 
\PP\big(\hat{q} \le q^*(\Delta) - \eps\big) \le \delta.
$ Taking $\eps \rightarrow 0$ and by the continuity  of 
the probability measure, we have that 
$\PP\big(\hat{q} < q^*(\Delta)\big)\le \delta $.

\paragraph{Step (II).}
Let $A = \{S_{n+j} \le q^*(\Delta)\}$.
As in the proof of Theorem~\ref{thm:cov_known_shift}, 
we have that 
\$ 
\PP_{Q_{X,Y}}\big(S_{n+j} \le q^*(\Delta) \big)\ge 
g_{f,\rho}\big(\PP_{Q_X\times P_{Y\given X}}
\big(S_{n+j}\le q^*(\Delta)\big)\big)  
\ge g_{f,\rho}\big(g^{-1}_{f,\rho}(1-\alpha) - \Delta\big).
\$
On the event $\{\hat{q} \ge q^*(\Delta)\}$,
\$
\PP_{Q_{X,Y}}\big(S_{n+j} \le \hat{q} \given \cD_\tr^{(1)}, \cD_{\test,j}^{(1)}\big)
& \ge \PP_{Q_{X,Y}}\big(S_{n+j} \le q^*(\Delta)\big)\\
& \ge g_{f,\rho}\big(g^{-1}_{f,\rho}(1-\alpha) - \Delta\big) \\
& \ge g_{f,\rho}\big(g_{f,\rho}^{-1}(1-\alpha)\big)
- g'_{f,\rho}\big(g^{-1}_{f,\rho}(1-\alpha)\big)\Delta,
\$
where the last step follows from the convexity of $g_{f,\rho}$ and
the separating hyperplane theorem. As proved in
Theorem~\ref{thm:cov_known_shift}, when $g_{f,\rho}(1) \ge 1-\alpha$, 
$g_{f,\rho}(g^{-1}_{f,\rho}(1-\alpha)) \ge 1-\alpha$ and we complete 
the proof.
\end{proof}

\paragraph{Proof of Theorem~\ref{thm:dr_pi}}
In this proof, we write $\Delta(\delta)$ instead of $\Delta$ to 
emphasize the dependence of $\Delta$ on $\delta$.
By Theorem~\ref{thm:hp_dr_pi}, we know that for any $\delta \in (0,1)$, 
$\PP\big(\hat{q} < q^*\big(\Delta(\delta)\big)\big) \le \delta$.

In the following, we shall consider a sequence of $\delta \in  
\{ 2^{-\ell}\}_{\ell = 0}^{\infty}$. For 
each $\ell \in \mathbb{N}$, we let $q_\ell = q^*\big(\Delta(2^{-\ell})\big)$.
\@\label{eq:peel1}
& \PP_{Q_X\times P_{Y\given X}}\big(S_{n+j} \le \hat{q}\big) - g_{f,\rho}^{-1}(1-\alpha)\\
=~& \sum^{\infty}_{\ell = 0}\EE\Big[\ind
\{q_{\ell+1} \le \hat{q} < q_{\ell} \}\cdot 
\Big(
\PP_{Q_X\times P_{Y\given X}}\big(S_{n+j} \le \hat{q} \biggiven \cD_\tr^{(k)}, 
\cD_{\test,j}^{(k)}\big) - g_{f,\rho}^{-1}(1-\alpha)\Big)\Big]\\
& \qquad \qquad + \EE\Big[ \ind\big\{\hat{q} \ge q_0\big\}  \cdot \Big(
\PP_{Q_X\times P_{Y\given X}}\big(S_{n+j} \le \hat{q} \biggiven \cD_\tr^{(k)}, 
\cD_{\test,j}^{(k)}\big) - g_{f,\rho}^{-1}(1-\alpha)\Big)\Big]\\
\ge ~&\sum^{\infty}_{\ell = 0}\EE\Big[\ind\{q_{\ell+1} \le \hat{q} < q_{\ell}\}
\cdot \big(
F({q}_{\ell+1}) - g_{f,\rho}^{-1}(1-\alpha)\big)\Big]
 + \EE\Big[ \ind\big\{\hat{q} \ge q_0\big\}  \cdot \big(
F(q_0) - g_{f,\rho}^{-1}(1-\alpha)\big)\Big]\\
\ge ~& -\sum^\infty_{\ell = 0}\Delta(2^{-\ell-1}) \cdot \PP\big( 
q_{\ell+1} \le  \hat{q} < q_{\ell}\big) - \Delta(0) \cdot \PP(\hat{q}\ge q_0)
,
\@
where the last step is due to the definition of $q_\ell$.  
Since for any $\ell \in \mathbb{N}$, $\PP(\hat{q} < q^*(\Delta(2^{-\ell}))) \le 2^{-\ell}$,
 we further have
\$
\eqref{eq:peel1}
\ge~& -\sum^\infty_{\ell = 0}\Delta(2^{-\ell})
\PP\big(\hat{q} < q^*\big(\Delta(2^{-\ell-1})\big)\big) 
- \Delta(0)\\
\ge~& -\sum^\infty_{\ell=0} 2^{-\ell-1} \Delta(2^{-\ell-1}) - \Delta(0)\\
\ge ~& 
\sup_{\kappa q^*{\bar{\Delta}} \le t \le q^*(0)}2\cdot \esterr(t)
+ \sqrt{\frac{16w_{\max}^2}{n_\tr} + \frac{2}{n_\test}}.
\$
We now return to the coverage under $Q_{X,Y}$. Again using the step 
in the proof of Theorem~\ref{thm:cov_known_shift}, we have 
\$ 
\PP_{Q_{X,Y}}(S_{n+j} \le \hat{q}) & \ge 
g_{f,\rho}\big(\PP_{Q_X \times P_{Y\given X}}(S_{n+j} \le \hat{q})\big)\\
& \ge g_{f,\rho}\bigg(g^{-1}_{f,\rho} -  
\sup_{\kappa q^*{\bar{\Delta}} \le t \le q^*(0)}2 \cdot \esterr(t)
- \sqrt{\frac{16w_{\max}^2}{n_\tr} + \frac{2}{n_\test}}\bigg)\\
& \ge g_{f,\rho}\big(g^{-1}_{f,\rho}(1-\alpha)\big)
- g'_{f,\rho}(g^{-1}_{f,\rho}(1-\alpha))\cdot 
\bigg( \sup_{\kappa q^*{\bar{\Delta}} \le t \le q^*(0)}2 \cdot \esterr(t)
+ \sqrt{\frac{16w_{\max}^2}{n_\tr} + \frac{2}{n_\test}}\bigg)
\$
When $g_{f,\rho}(1) \ge 1-\alpha$, $g_{f,\rho}(g^{-1}_{f,\rho}(1-\alpha))\ge 1-\alpha$.
The proof is thus completed.

\subsection{Proof of theorem \ref{thm:est_rho}}
Throughout, we condition on $\hat{\rho}$ and 
$\cD_{\tr}^{(1-k)}\cup \cD_{\test}^{(1-k)}$.
Fix $k \in \{0,1\}$ and $\{n+j\} \in \cI_\test^{(k)}$.
Define $A = \big\{Y_{n+j} \in \hat{C}_{f,\hat{\rho},n+j}
(X_{n+j})\big\}$.
By the proof of Theorem 1, we have that 
\begin{align}
\mathbb{P}_{(X_{n+j},Y_{n+j})\sim Q_{XY}}
(A)
&\geq  
g_{f,\rho^*}\Big(\mathbb{P}_{(X_{n+j},Y_{n+j})\sim Q_X\times P_{Y \given X}}(A )\Big).
\end{align}
Next, by the intermediate steps in the proof of Theorem~\ref{thm:cov_est}, there is
\begin{align}
P_{(X_{n+j}, Y_{n+j}) \sim Q_X \times P_{Y\given X}}(A)
&\geq P_{(X_{n+j},Y_{n+j}) \sim \tilde{Q}_X \times P_{Y  \given X}}(A)
-\frac{1}{2}\mathbb{E}_{X\sim P_X}
\bigg[\Big|\frac{\hat{w}^{(k)}(X)}{\EE_{X\sim P_X}[\hat{w}^{(k)}(X)]}-w(X)\Big|\bigg]\\
&\geq g_{f,\hat{\rho}}^{-1}(1-\alpha)
-\frac{1}{2}\mathbb{E}_{X\sim P_X}
\bigg[\Big|\frac{\hat{w}^{(k)}(X)}{\EE_{X\sim P_X}[\hat{w}^{(k)}(X)]}-w(X)\Big|\bigg].
\end{align}
Combining the above inequalities and 
since the monotonicity of $g_{f,\rho^*}(\beta)$ in $\beta$, we have 
\begin{align}
    & \mathbb{P}_{(X_{n+j},Y_{n+j})\sim Q_{XY}}\big(Y_{n+j} \in \hat{C}_{f,\hat{\rho},n+j}(X_{n+j})\big)\\
    \geq~&
    g_{f,\rho^*}\Bigg(g_{f,\hat \rho}^{-1}(1-\alpha)-
    \frac{1}{2}\mathbb{E}_{X\sim P_X}\bigg[\Big|
    \frac{\hat{w}^{(k)}(X)}{\EE_{P_X}[\hat{w}^{(k)}(X)]}-w(X)\Big|\bigg]\Bigg)\\
    \ge ~&g_{f,\rho^*}\big(g_{f,\hat \rho}^{-1}(1-\alpha)\big) 
    -\frac{1}{2} g'_{f,\rho^*}(g^{-1}_{f,\hat{\rho}}(1-\alpha))\cdot 
    \mathbb{E}_{X\sim P_X}\bigg[\Big|
    \frac{\hat{w}^{(k)}(X)}{\EE_{P_X}[\hat{w}^{(k)}(X)]}-w(X)\Big|\bigg].
\end{align}
When $\hat{\rho} \ge \rho^*$, $g_{f,\rho^*}(g^{-1}_{f,\hat{\rho}}(1-\alpha))
\ge g_{f,\hat{\rho}}(g^{-1}_{f,\hat{\rho}}(1-\alpha))$. 
The latter is greater or equal to 
$1-\alpha$ when $g_{f,\hat{\rho}}(1) \ge 1-\alpha$, following the proof of Theorem~\ref{thm:cov_known_shift}.
The proof is therefore completed.


\section{Additional results of sensitivity analysis under the $f$ sensitivity model}
\label{appx:ite}
This section collects additional results of adapting our method 
to the sensitivity analysis of ITE under the $f$-sensitivity model.

Suppose that the inferential target is $Y(t_1)$ for $t_1 \in \{0,1\}$;
and the target population is $T = t_2$, where $t_2 \in \{0,1,\circ\}$, 
with $\circ$ denoting the whole population.
The prediction interval $\hat{C}_{f,\rho}(X_{n+1})$ should satisfy 
\$ 
\PP\big(Y_{n+1}(t_1) \in \hat{C}_{f,\rho}(X_{n+1}) 
\biggiven T = t_2 \big) \ge 1-\alpha.
\$
Given a set of training data $\cD_\tr = \{(X_i, T_i,Y_i)\}_{i=1}^n$,
we start as before by randomly splitting the data into two folds 
$\cD_\tr^{(0)}$ and $\cD_\tr^{(1)}$. The first fold $\cD_\tr^{(0)}$ is 
used for fitting the propensity score function $\hat{e}$(x) (if unknown);
we also use the unit in $\cD_\tr^{(0)}$ such that $T = t_1$
to fit a function $\hat{\mu}^{(t_1)}$ for predicting $Y(t_1)$. 
The form of the covariate shift weight function  
$w^{(t_1,t_2)}(x)$ is listed in Table~\ref{tab:cov_shift}.
Next, for any $j\in [m]$, 
the prediction interval is constructed as
\begin{align}
\label{eq:ite_pi_form}
& \hat{C}_{f,\rho,n+j}^{(t_1,t_2)}(x)=
\bigg\{y\in \RR: s(x,y)\leq \text{Quantile} 
\Big(g_{f,\rho}^{-1}(1-\alpha),
\sum_{i \in \cD_\tr^{(1)},T_i = t_2} 
p_i^{(t_1,t_2)}(x) \cdot \delta_{S_i} 
+ p_{n+1}^{(t_1,t_2)}(x)  \cdot \delta_\infty \Big) \bigg\},\nonumber\\
&\text{where }p_i^{(t_1,t_2)}(x) = 
\frac{\hat{w}^{(t_1,t_2)}(X_i)}{\sum_{j \in \cD_\tr^{(1)}, T_j = t_2} 
\hat{w}^{(t_1,t_2)}(X_j) 
+ \hat{w}^{(t_1,t_2)}(x)},\nonumber \\ 
&\qquad \quad   p_{n+1}^{(t_1,t_2)}(x) = \frac{\hat{w}^{(t_1,t_2)}(x)}
{\sum_{j \in \cD_\tr^{(1)},T_j = t_2} 
\hat{w}^{(t_1,t_2)}(X_j) 
+ \hat{w}^{(t_1,t_2)}(x)}.
\end{align}
Above, $\hat{w}^{(t_1,t_2)}$ is the estimator for $w^{(t_1,t_2)}$.
The complete procedure can be found in Algorithm~\ref{alg:ite}.
\begin{table}[ht]
    \centering
    \begin{tabular}{c|c|c|c}
    \toprule
    \diagbox{$t_1$}{$t_2$}     & $1$ & $0$ & $\circ$\\
    \midrule
    $1$  & $1$ & $\tfrac{1-e(x)}{e(x)} \cdot \frac{p_1}{p_0}$ & 
    $\frac{p_1}{e(x)}$\\
    \midrule
    $0$ &$\frac{e(x)}{1-e(x)} \cdot \frac{p_0}{p_1}$ & $1$ &
    $\frac{p_0}{1-e(x)}$\\
    \bottomrule
    \end{tabular}
    \caption{The form of the covariate shift function $w^{(t_1,t_2)}(x)$. 
    The function $e(x) = \PP(T = 1\given X = x)$ is the observed propensity
    score function, $p_1 = \PP(T=1)$, and $p_0 = \PP(T=0)$.}
    \label{tab:cov_shift}
\end{table}

\begin{algorithm}[htbp]
\caption{Conformalized counterfactual inference under the $f$-sensitivity model}
\label{alg:ite}
\KwIn{Training set $\cD_\tr = \{(X_i,T_i,Y_i)\}_{i=1}^n$;
test data $\{X_{n+j}\}_{j=1}^m$\;
counterfactual type $t_1 \in \{0,1\}$, target population $t_2 \in \{0,1,\circ\}$\; 
outcome fitting algorithm $\mathcal{A}$; propensity score fitting 
algorithm $\cE$\;
target miscoverage level $\alpha \in (0,1)$; score function $s(x,y;\mu)$;
sensitivity parameter $\rho$.}

\vskip 1em
Randomly split $\cD_\tr$ into two subsets of equal sizes 
$\mathcal{D}_\tr^{(0)}$ and  $\mathcal{D}_\tr^{(1)}$, indexed by 
$\cI_\tr^{(0)}$ and $\cI_\tr^{(1)}$\;
\vskip 0.2em 
Apply $\mathcal{A}$ to obtain the outcome regression functions:  
$\hat{\mu}^{(t_1)} \leftarrow \cA\big(\{(X_i,Y_i): i \in \cI_\tr^{(0)},T_i = t_1\}\big)$\;
\vskip 0.2em 
Apply $\cE$ to obtain the estimated propensity score function: 
$\hat{e}\leftarrow \mathcal{E}(\cD_\tr^{(0)})$\;
\vskip 0.2em 
Compute the nonconformity scores:
$S_i= s(X_i, Y_i; \hat{\mu}^{(t_1)})$ 
for $i\in \mathcal{I}_\tr^{(1)}$ such that $T_i = t_1$\;
\vskip 0.4em
\For{$\ell \in \cI_\test$}{
Construct $\hat{C}^{(t_1,t_2)}_{f,\rho,\ell}(X_\ell)$ according to~\eqref{eq:ite_pi_form}\;
}

\KwOut{Prediction sets $\{\hat{C}^{(t_1,t_2)}_{f,\rho,n+j}(X_{n+j})\}_{j \in [m]}$.}
\end{algorithm}

\section{Additional simulation results}
\label{appx:simulation}

This section collects results from additional simulation studies. 
In specific, we evaluate the candidate methods under a pure $Y \given X$ shift setting, 
a pure $X$ shift setting, and a nonlinear model setting; the effect of the choice of $f$ 
is also studied.

\subsection{Pure $Y\given X$ Shift.}
We consider a setting where the distributional shift is purely in the 
conditional relationship between $Y$ and $X$ while the covariate distribution 
remains invariant, i.e. $P_X  =  Q_X$. The other settings are the same as 
in Section~\ref{sec:simulation}. 
The corresponding results are presented in Figure~\ref{fig:sim_no_cov}. 
Since the distributional shift is purely in $Y\given X$, \texttt{WCP} suffers 
the same degree of coverage drop as \texttt{CP}. \texttt{WRCP} and \texttt{D-WRCP}
provide valid and tight prediction intervals. The prediction interval of 
\texttt{RCP} is overly conservative due to the estimation error of 
the $X$ shift.

\begin{figure}[h]
\centering
\includegraphics[width = 0.5\textwidth]{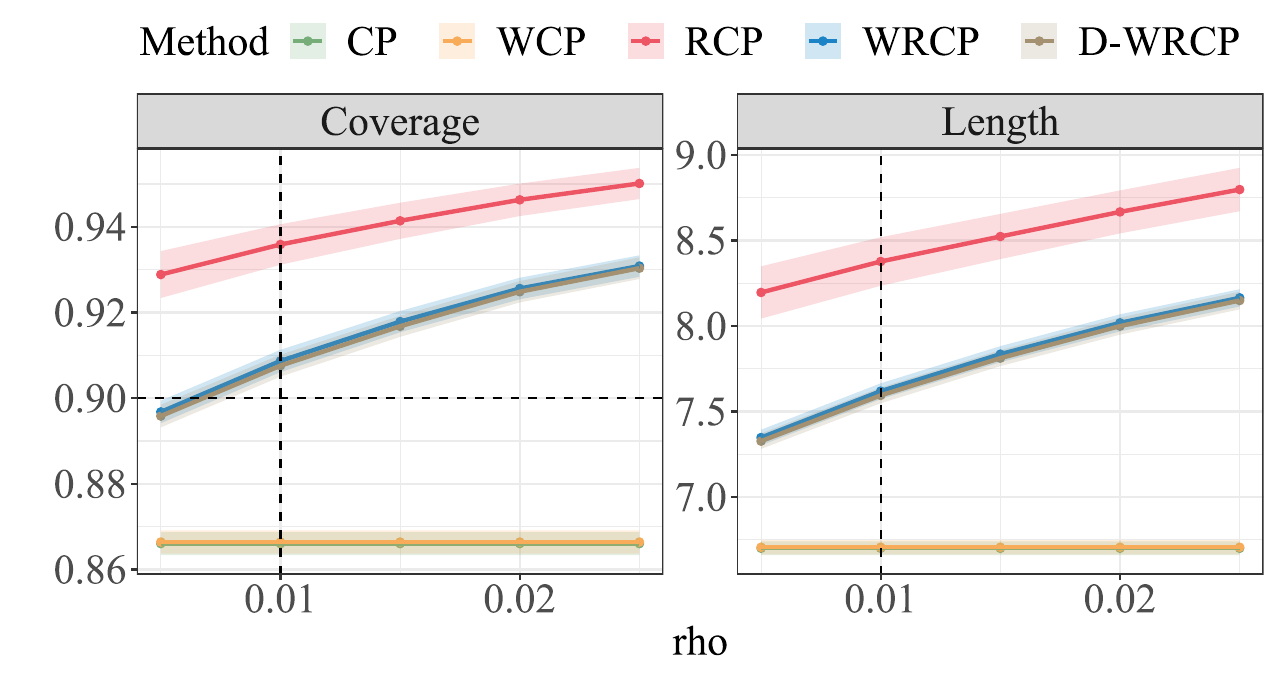}
\caption{Averaged coverage (left) and prediction interval length (right) under pure $Y\given X$ shift. 
The details are otherwise the same as in Figure~\ref{fig:sim-res}.} 
\label{fig:sim_no_cov}
\end{figure}

\subsection{Pure $X$ Shift.}
We now consider a pure covariate shift, 
fixing the conditional relationship between $Y$ and $X$. 
The other setup is the same as in Section~\ref{sec:simulation}, 
except that we take $\eta$ (the $X$ shift parameter) to be $0.1$.
The results are shown in Figure~\ref{fig:sim_no_cond}, where 
\texttt{WCP} achieves the desired coverage level; our proposed 
method is a bit conservative, but less severe than \texttt{RCP}.

\begin{figure}[h]
\centering
\includegraphics[width = 0.5\textwidth]{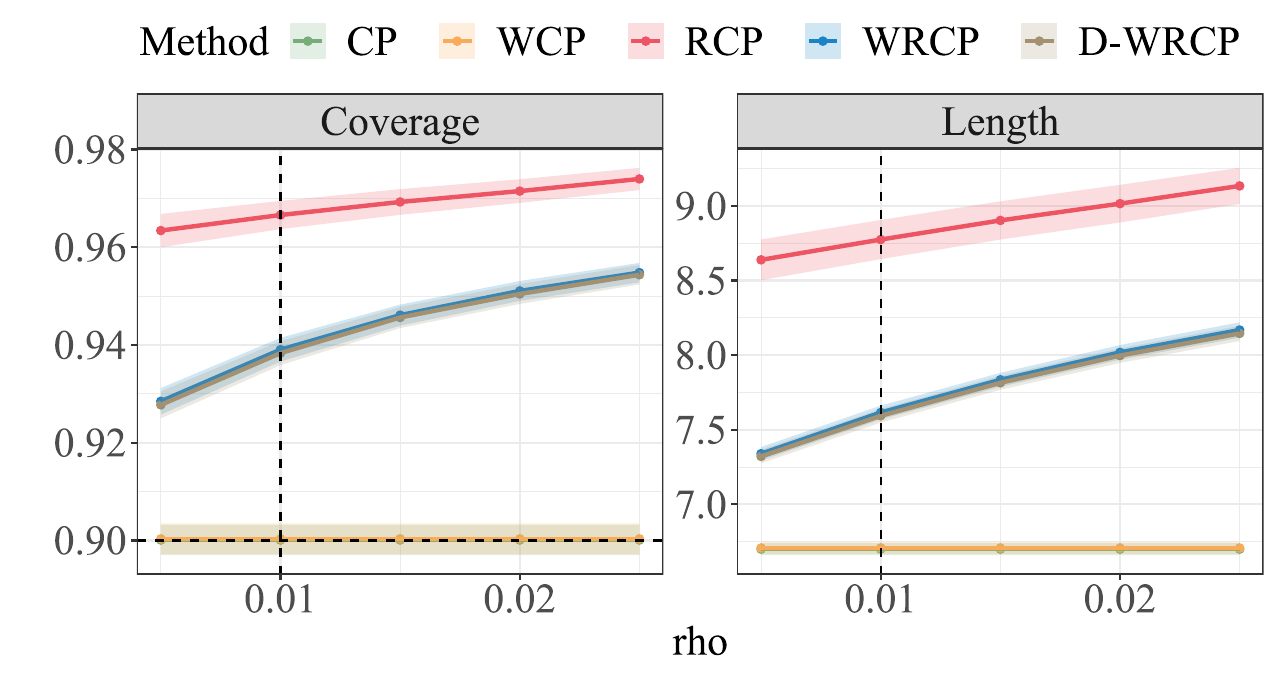}
\caption{Averaged coverage (left) and prediction interval length (right) under pure $X$ shift. 
The details are otherwise the same as in Figure~\ref{fig:sim-res}.} 
\label{fig:sim_no_cond}
\end{figure}

\subsection{Nonlinear Models.}
Next, we consider a nonlinear relationship between $Y$ and $X$. 
Under $P$,  
\$ 
Y = \frac{1}{(1+e^{X_1})(1+e^{-X_2})} + \eps, \text{ where }\eps \sim \cN(0,1).
\$
The $X$ shift parameter $\eta$ is set to be $0.1$, and all the other 
settings are the same as in Section~\ref{sec:simulation}.
The corresponding results are presented in Figure~\ref{fig:sim_nonlinear}, 
where the message is similar to the linear case: \texttt{WCP} is not able to  
address the distributional shift,  \texttt{RCP} is overly conservative,
while our proposed methods provide valid and tight prediction intervals.

\begin{figure}[h]
\centering
\includegraphics[width = 0.5\textwidth]{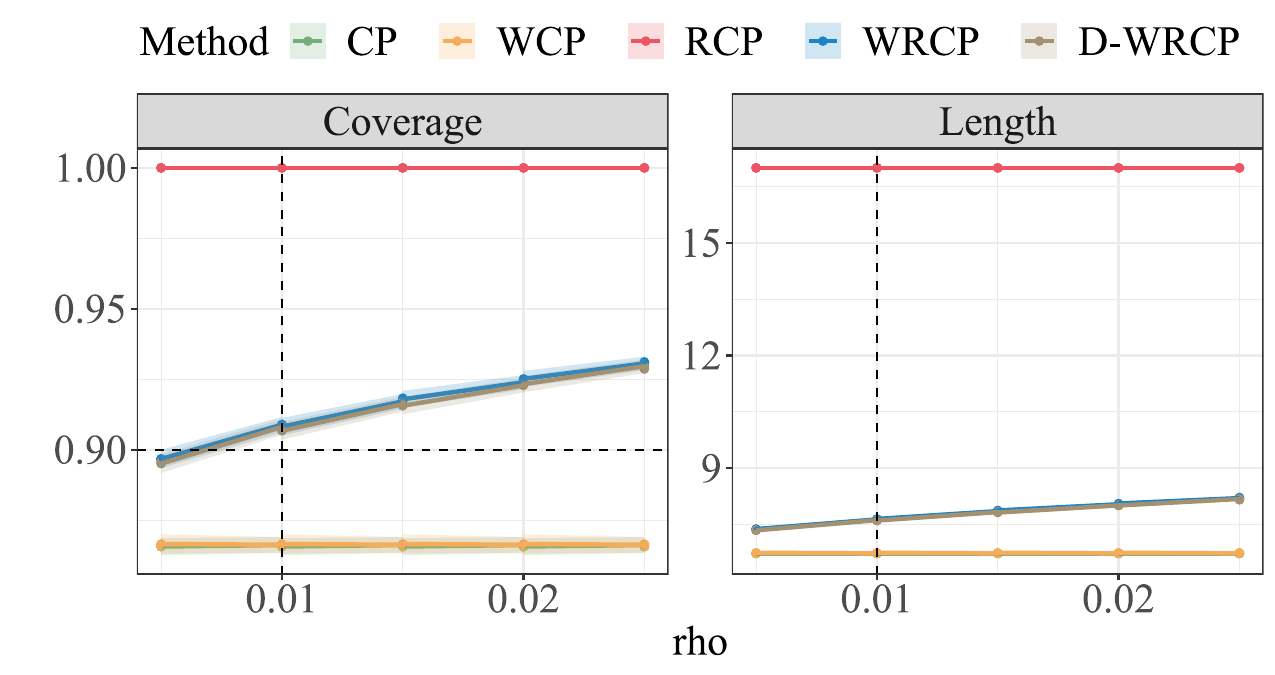}
\caption{Averaged coverage (left) and prediction interval length (right) 
under the nonlinear model experiment. 
The details are otherwise the same as in Figure~\ref{fig:sim-res}.} 
\label{fig:sim_nonlinear}
\end{figure}

\subsection{Effect of $f$.}
Finally, we investigate the effect of the choice of $f$ on the performance of 
our proposed methods. We consider the same setup as in Section~\ref{sec:simulation}
with $\rho^* = 0.01$ (under KL divergence) and $\eta = 0.5$. 
We vary the choice of $f$ in the implementation of \texttt{WRCP} to be 
(1) $f(x) = x\log x$ (KL divergence), (2) $f(x) = \frac{1}{2}|x-1|$ (TV distance), 
and (3) $f(x) = (x-1)^2$ ($\chi^2$ divergence).
Figure~\ref{fig:f_comp} demonstrates the averaged coverage and prediction interval length 
under different choices of $f$. The results show that the choice of $f$ has a rather 
mild effects on the performance of \texttt{WRCP}.

\begin{figure}[h]
\centering
\includegraphics[width = 0.5\textwidth]{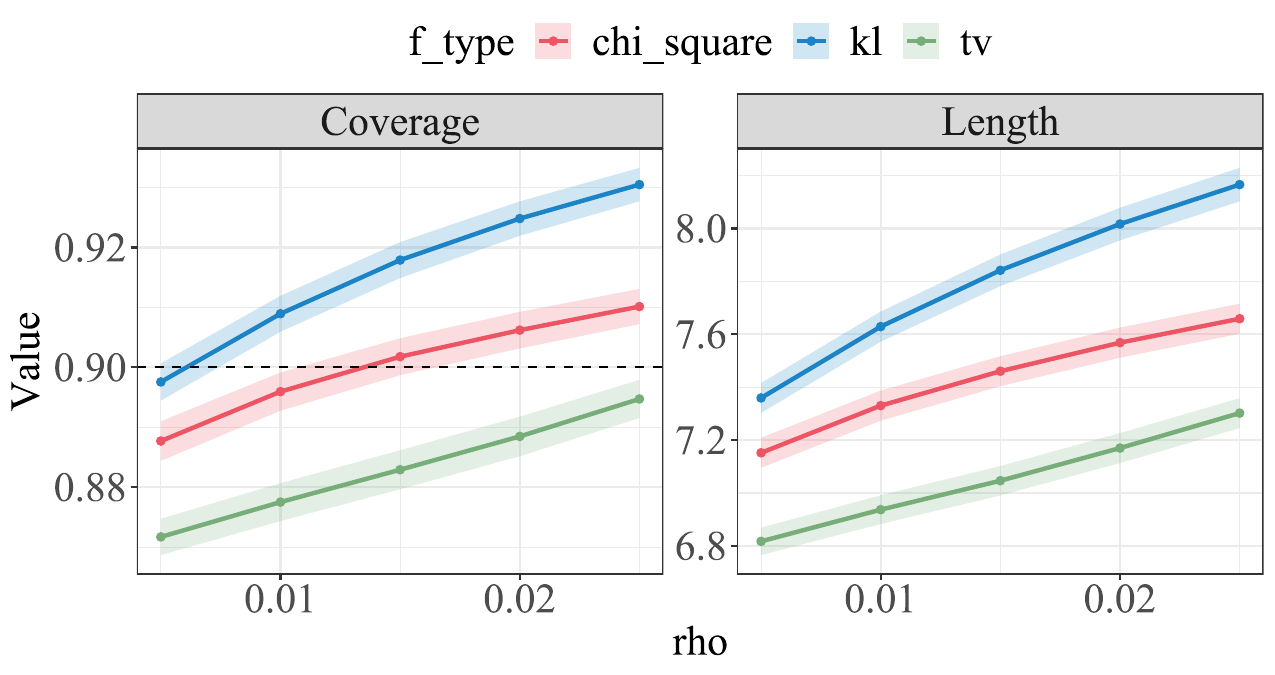}
\caption{Averaged coverage (left) and prediction interval length (right) 
with different choices of $f$. 
The details are otherwise the same as in Figure~\ref{fig:sim-res}.} 
\label{fig:f_comp}
\end{figure}

\section{Additional Real Data Results}
\label{appx:real_data}
We additionally evaluate the candidate methods on the poverty mapping dataset~\citep{yeh2020using},
    where the task is to estimate the poverty rate in different regions of the world 
    with the help of satellite images. The data is obtained and preprocessed with  
    the WILDS python package~\citep{wilds2021}.

    The covariate $X$ corresponds to 
    a $224\times 224$-pixel image, and the outcome $Y$ is  a real-valued asset wealth index
    (computed from Demographic and Health Surveys data). In our implementation, 
    the training set consists of $500$ samples, and the test set consists of $200$ samples.
    We fit a convolutional neural network (CNN) for $\hat \mu$; for $\hat w$, the 
    variational autoencoder (VAE) is used to learn a representation of the images, upon 
    which a random forest classifier is trained to estimate the $X$ shift.

    We consider $\rho \in \{0.1,0.12,\ldots,0.18\}$, repeating the above process for 100 random splits 
    under each $\rho$. Figure~\ref{fig:poverty} shows the results of \texttt{CP}, \texttt{WCP}, 
    \texttt{RCP}, and \texttt{WRCP}. We again see that \texttt{CP} and \texttt{WCP} show 
    significant under-coverage, while \texttt{RCP} is overly conservative. \texttt{WRCP} 
    delivers valid and tight prediction intervals for a wide range of $\rho$'s.

    \begin{figure}[ht]
        \centering
        \includegraphics[width = 0.7\textwidth]{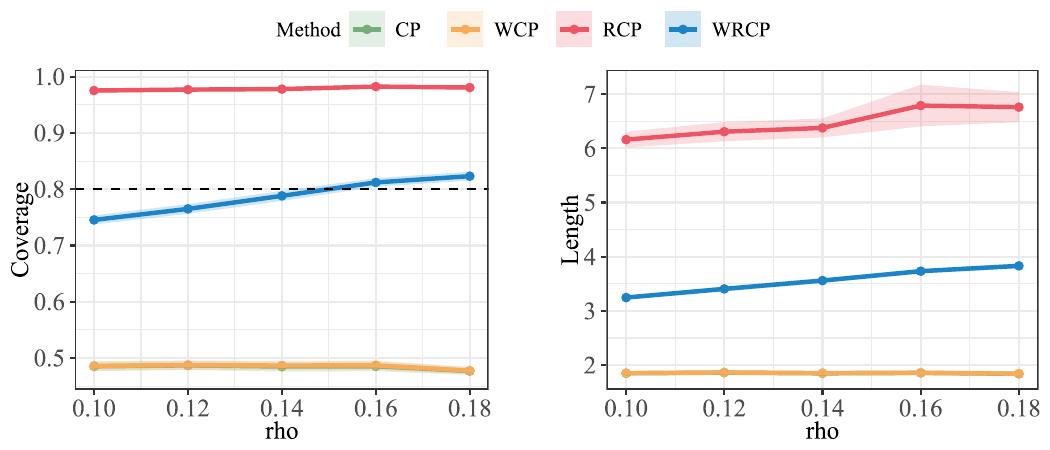}
        \caption{Averaged coverage (left) and prediction interval length (right)
        over $100$ runs as a function of the robust parameter $\rho$
        from the experiment on poverty estimation. 
        The other details are the same as in Figure~\ref{fig:nslm}.}
        \label{fig:poverty}
    \end{figure}
\end{document}